\title{\textbf{Planar Graphs: Random Walks and Bipartiteness Testing}\thanks{To appear in \emph{Random Structures and Algorithms}, {\tt http://doi.org/10.1002/rsa.20826}\,, license: CC-BY. A preliminary version appeared in \emph{Proceedings of the 52th IEEE Symposium on Foundations of Computer Science (FOCS)}, pages 423--432, Palm Springs, CA, October 22--25, 2011. IEEE Computer Society Press, Los Alamitos, CA, 2011.}}
\author{Artur Czumaj\thanks{Department of Computer Science, Centre for Discrete Mathematics and its Applications, University of Warwick. Email: A.Czumaj@warwick.ac.uk. Research supported in part by the \emph{Centre for Discrete Mathematics and its Applications (DIMAP)}, EPSRC award EP/D063191/1, by EPSRC award EP/G064679/1, and by a Weizmann-UK Making Connections Grant ``The Interplay between Algorithms and Randomness.''}
    \and
		Morteza Monemizadeh\thanks{Amazon, Palo Alto, CA, USA and Computer Science
Institute of Charles University, Faculty of Mathematics and Physics,
Prague, Czech Republic. The work was done when the author was at the Department of
Computer Science, Goethe-Universit{\"a}t Frankfurt, Germany, University of
Maryland, College Park, MD, USA and Computer Science Institute of Charles
University, Faculty of Mathematics and Physics, Prague, Czech Republic.
Email: monemi@iuuk.mff.cuni.cz, mmorteza@amazon.com.
Partially supported by the projects MO 2200/1-1 and 14-10003S of GA \v{C}R.
}
    \and
Krzysztof Onak\thanks{IBM T.J.\ Watson Research Center. Part of the work was done when the author was at the Massachusetts Institute of Technology and Carnegie Mellon University. Email: konak@us.ibm.com. Supported in part by a Simons Postdoctoral Fellowship and NSF grants 0732334 and 0728645.}
    \and
Christian Sohler\thanks{Department of Computer Science, TU Dortmund. Email: christian.sohler@tu-dortmund.de. Supported in part by the German Research Foundation (DFG), grant So 514/3-2 and ERC grant No.\ 307696.}
}
\date{August 2018}
\newcommand{\mydriver}{hypertex}
 \renewcommand{\mydriver}{pdftex}
\DeclareMathOperator{\poly}{poly}
\newenvironment{proof}{\noindent {\bf Proof}.\ }{\qed \par\vskip 4mm\par}
\newtheorem{theorem}{Theorem}
\newtheorem{lemma}[theorem]{Lemma}
\newtheorem{defn}[theorem]{Definition}
\newenvironment{definition}[1][]{\IfStrEq{#1}{}{\begin{defn}}{\begin{defn}[#1]}\rm}{\end{defn}}
\newtheorem{claim}[theorem]{Claim}
\newtheorem{fact}[theorem]{Fact}
\newcommand{\qed}{\hspace*{\fill}\sq}
\renewcommand{\qed}{\hspace*{\fill}\ensuremath{\blacksquare}}
\newcommand{\COMMENTED}[1]{{}}
\newcommand{\junk}[1]{\COMMENTED{#1}}
\newcommand{\NAT}{\ensuremath{\mathbb{N}}}
\newcommand{\eps}{\ensuremath{\epsilon}}
\def\epsilon{\ensuremath{\varepsilon}}
\newlength{\savedparindent}
\newcommand{\SaveIndent}{\setlength{\savedparindent}{\parindent}}
\newcommand{\RestoreIndent}{\setlength{\parindent}{\savedparindent}}
\newcommand{\InGray}[1]{%
\SaveIndent{} %
\noindent{} \fcolorbox[rgb]{0,0,0}{0.95,0.95,0.95}{
\begin{minipage}{0.965\linewidth} %
\RestoreIndent{}%
#1
\end{minipage}
} }
\newcommand{\RBE}{{\bf Random-Bipartiteness-Exploration}}
\newcommand{\RW}{{\bf Random-Walk}}
\newcommand{\AL}{\textbf{Assigning-Levels}}
\begin{document}

\maketitle

\begin{abstract}
We initiate the study of property testing in \emph{arbitrary planar graphs}. We prove that \emph{bipartiteness} can be tested in constant time, improving on the previous bound of $\tilde{O}(\sqrt{n})$ for graphs on $n$ vertices. The constant-time testability was only known for planar graphs with \emph{bounded degree}.

Our algorithm is based on random walks. Since planar graphs have good separators, i.e., bad expansion, our analysis diverges from standard techniques that involve the fast convergence of random walks on expanders. We reduce the problem to the task of detecting an odd-parity cycle in a multigraph induced by constant-length cycles. We iteratively reduce the length of cycles while preserving the detection probability, until the multigraph collapses to a collection of easily discoverable self-loops.

Our approach extends to arbitrary minor-free graphs. We also believe that our techniques will find applications to testing other properties in arbitrary minor-free graphs.
\end{abstract}


\section{Introduction}

\emph{Property testing} studies relaxed decision problems in which one wants to distinguish objects that have a given property from those that are far from this property (see, e.g., \cite{Gol10}). Informally, an object $\mathcal{X}$ is $\eps$-far from a property $\mathcal{P}$ if one has to modify at least an $\eps$-fraction of $\mathcal{X}$'s representation to obtain an object with property $\mathcal{P}$, where $\eps$ is typically a small constant. Given oracle access to the input object, a typical property tester achieves this goal by inspecting only a small fraction of the input.
Property testing is motivated by the need to understand how to extract information efficiently from massive structured or semi-structured data sets using small (possibly adaptive) random samples.

One of the main and most successful directions in property testing is \emph{testing graph properties}, as introduced in papers of Goldreich et al.\ \cite{GGR98,GR97}. There are two popular models for this task, which make different assumptions about how the input graph is represented and how it can be accessed.

For a long time, the main research focus has been on the \emph{adjacency matrix model}, designed specifically for \emph{dense} graphs \cite{GGR98}. In this model, after a sequence of papers, it was shown that testability of a property in constant time is closely related to Szemer\'{e}di partitions of the graph. More precisely, a property is testable in constant time\footnote{Throughout the paper we say that a property is \emph{testable in constant time} if there is a testing algorithm whose number of queries to the input is independent of the input size, possibly depending only on the proximity parameter $\eps$.} if and only if it can be reduced to testing finitely many Szemer\'{e}di partitions \cite{AFNS06}.

The \emph{general graph model} or incidence list model has been introduced in \cite{PR02} and assumes that the graph is stored using incidence lists. Each list contains the vertex degree (length of the list) at the beginning, i.e., one can access the vertex degree in constant time.
A variant of this model where the algorithm can specify a pair of vertices and query whether they are adjacent has been introduced in \cite{KKR04}.

 In the \emph{bounded-degree graph model} introduced in \cite{GR97} we have the additional restriction that the degree of the graph is at most a certain predefined constant $d$. Unlike in the adjacency matrix model, it is not yet completely understood what graph properties are testable in constant time in the adjacency list model. If the underlying graph has bounded degree and is planar, Czumaj et al.\ \cite{CSS09} show that any hereditary graph property\footnote{A graph property is \emph{hereditary} if it is closed under vertex removals.} is testable in constant time. This approach can be generalized to any class of graphs that can be partitioned into constant-size components by removing $\eps n$ edges of the graph, for any $\eps > 0$ (the technical condition in the paper was formulated slightly differently). Graphs satisfying this property are called \emph{hyperfinite}, and they include all bounded-degree minor-closed graph families.
A sequence of papers \cite{BSS08,HKNO09} led to the result that all hyperfinite properties are testable \cite{NS11}, i.e., for a general bounded-degree input graph
any property that consists only of hyperfinite graphs can be tested in constant time. Other testable properties include connectivity, $k$-edge-connectivity, the property of being Eulerian \cite{GR97}, and the property of having a perfect matching \cite{NO08}. Furthermore, every property of a certain class of scale-free
multigraphs is testable \cite{I16}.
On the other hand, some properties testable in constant time in the dense graph model, such as bipartiteness and 3-colorability, are known to require a superconstant number of queries \cite{BOT02,GR97}.

Most of the positive results mentioned above (in particular, \cite{CSS09,NS11,BSS08,HKNO09}) are heavily based on the fact that a bounded-degree hyperfinite
graph has a small edge separator. This edge separator allows to partition the graph into pieces of constant size by removing $\epsilon d n$ edges and can be thought of as reducing property testing in a large graph to a graph that consists of a collection of small components. Furthermore, most constant time testers work by performing a BFS from a constant size set of sample vertices and decide based on the sampled graph.

If the graph does not have a degree bound, such an approach can no longer work. First of all, general planar graphs do not necessarily have a small edge separator (consider, for example, a star). Furthermore, a BFS up to constant depth can no longer be performed in constant time and therefore, a different algorithmic approach
has to be considered. Overall, much less is known about efficiently testable properties for sparse graphs that do not have a degree bound.
It is known that connectivity, $k$-edge-connectivity, and Eulerian graphs are testable in constant time \cite{MR06}. Recently,
some more general results were obtained for simple classes of graphs such as trees \cite{KY14} and outerplanar graphs \cite{BKN16}.

\paragraph{Our contribution.}
The central goal of the this paper is to initiate the research on the complexity of testing graph properties in
general unbounded degree minor-closed graph families. Furthermore, the inquiry into the complexity of property testing for bounded-degree minor-closed (or, more generally, hyperfinite) input graphs was an important step towards our current understanding of property testing in the bounded-degree graph model.

We hope that following a similar route, we will in the long run also reach a better understanding of testable properties in graphs with unbounded degree.
The technical contribution in this paper is supposed to be a first step in this direction. We develop a new analysis of a random walk approach, which
was first introduced in \cite{GR99} for testing bipartiteness in arbitrary bounded-degree graphs. We
illustrate the usefulness of our approach by giving a proof that bipartiteness is testable in constant time.

Similarly to the case of bounded-degree minor-closed graph families, our approach exploits a form of the graph separator theorem.
However, in the case of unbounded degree, only
a weaker form of the planar separator theorem is available, which allows us to partition the graph into subgraphs of small diameter by removing an $\epsilon$-fraction of the edges. In the next step we would like to argue that for every graph that is $\epsilon$-far from the tested property, the obtained partition
classes contain small counter-examples to the tested graph properties, i.e., in our showcase of testing bipartiteness, we prove that there is a large set of
short odd-length cycles.

The main contribution of the paper is then to show that a random walk from a random starting vertex finds such an odd-length cycle with small constant probability, so that repeating this walk a constant number of times will result in a property tester. In order to show that our algorithm works, we design
a reduction that takes an input graph $G$ with a large set of edge-disjoint odd-length cycles of length at most $k$ and reduce it to a graph $G'$ with the
properties that
\begin{itemize}
\item
$G'$ contains a large set of edge-disjoint odd-length cycles of length at most $k-1$ and
\item
if a random walk finds with constant probability an odd-length cycle in $G'$ then it does so in $G$.
\end{itemize}
We repeatedly apply this reduction until we obtain trivial odd-length cycles, i.e., self-loops that are easy to detect.

While our analysis at places uses the simple structure of our forbidden subgraph, i.e., that it is an odd-length cycle, it seems to be likely that a
similar reduction can work to find more complex subgraphs as well (however, there are also non-trivial obstacles for this to happen). At the same time
we remark that the technical details of our reduction are already highly non-trivial for the case of testing bipartiteness.


\paragraph{Bipartiteness.}

The problem of testing bipartiteness has been a great benchmark of the capabilities of property testing algorithms in various graph models. It was one of the first problems studied in detail in both the dense graph model \cite{GGR98} and the bounded-degree graph model \cite{GR97,GR99}. Bipartiteness is known to be testable in $\widetilde{O}(1/\eps^2)$ time in the dense graph model \cite{AK02}. However, in the bounded-degree graph model, it requires $\Omega(\sqrt{n})$ queries \cite{GR97} and is testable in $\widetilde{O}(\sqrt{n} \cdot \eps^{-O(1)})$ time \cite{GR99}, where $n$ is the number of vertices. Kaufman et al.\ \cite{KKR04} show that the property is still testable in $\widetilde{O}(\sqrt{n} \cdot \eps^{-O(1)})$ time in the adjacency list model for graphs that have constant \emph{average} degree.



\subsection{Approaches that do not work}

Given that bipartiteness can be tested in constant time in planar graphs of bounded degree \cite{CSS09}, it may seem that there is a simple extension of this result to arbitrary degrees. We now describe two natural attempts at reducing our problem to testing bipartiteness in other classes of graphs. We explain why they fail. We hope that this justifies our belief that new techniques are necessary to address the problem.


\begin{figure}[t]
\centerline{
(a)
\includegraphics[width=0.40\textwidth]{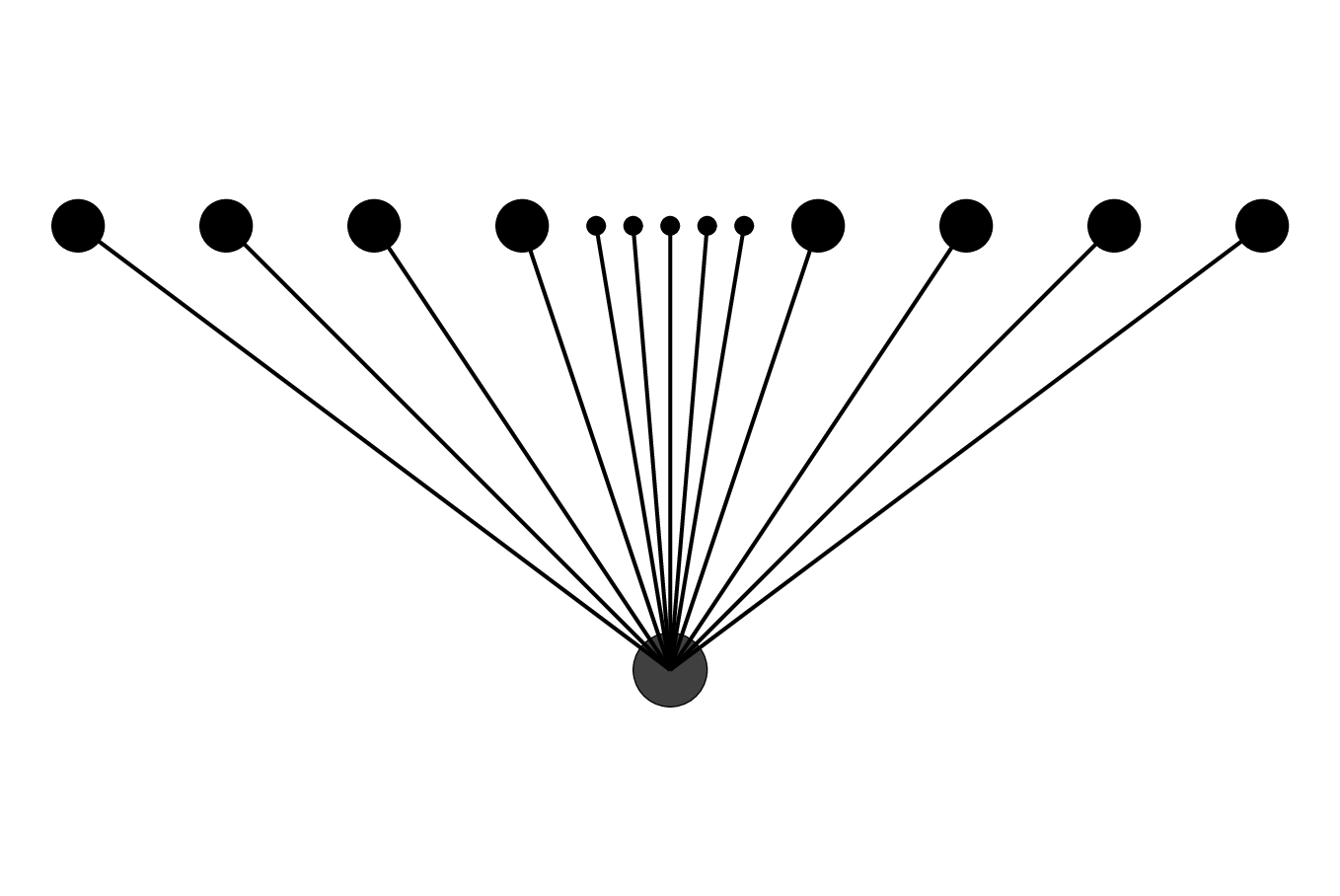}
        \qquad\quad
(b)
\includegraphics[width=0.40\textwidth]{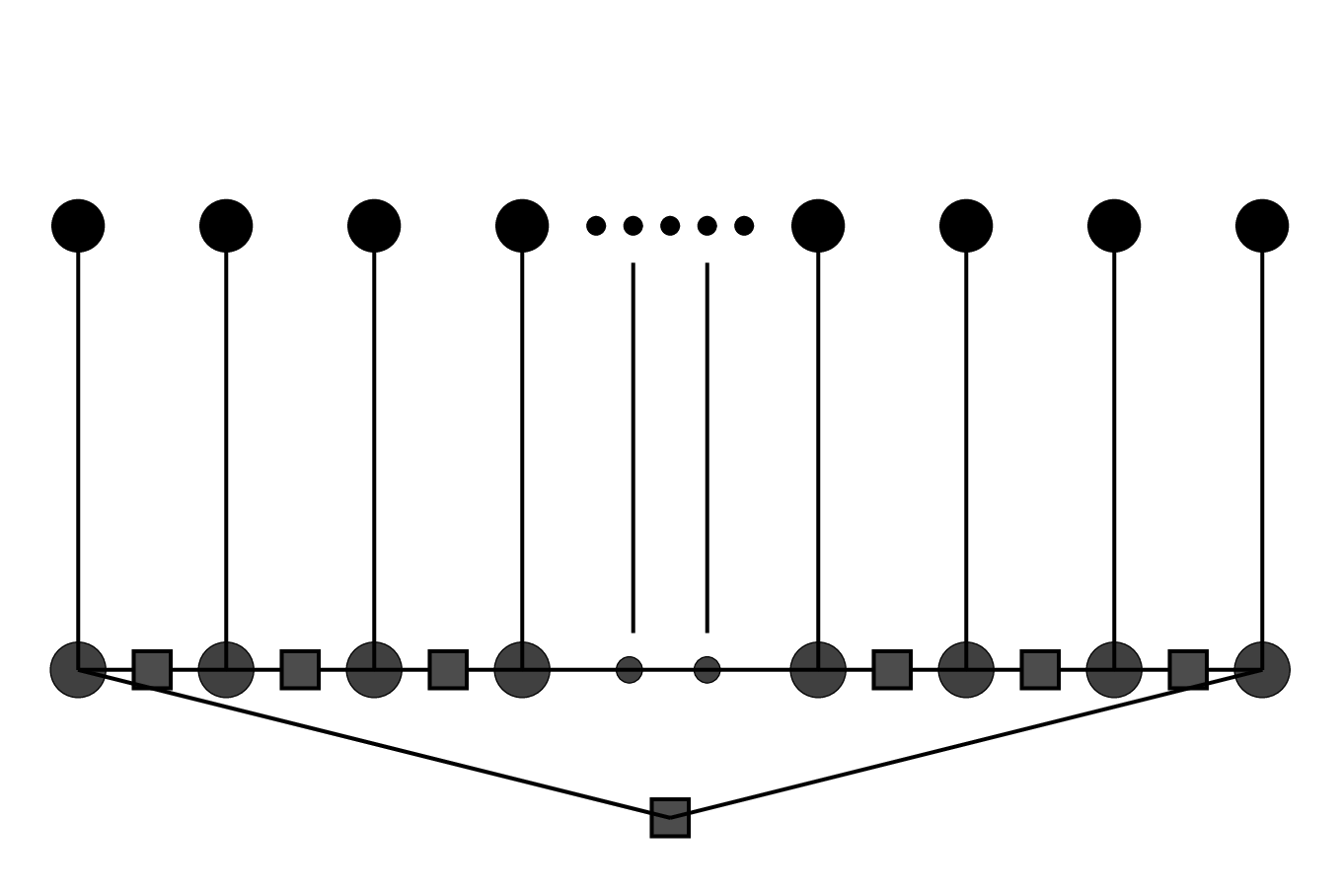}
}
\caption{An example of the process of splitting a vertex that reduces any graph into a graph of maximum degree at most $3$ and that maintains planarity. For the graph in (a), Figure (b) depicts the splitting that is invariant to being bipartite.}
\label{fig:splitting-to-deg-3}
\end{figure}

\begin{figure}[t]
\centerline{
(a)
\includegraphics[width=0.25\textwidth]{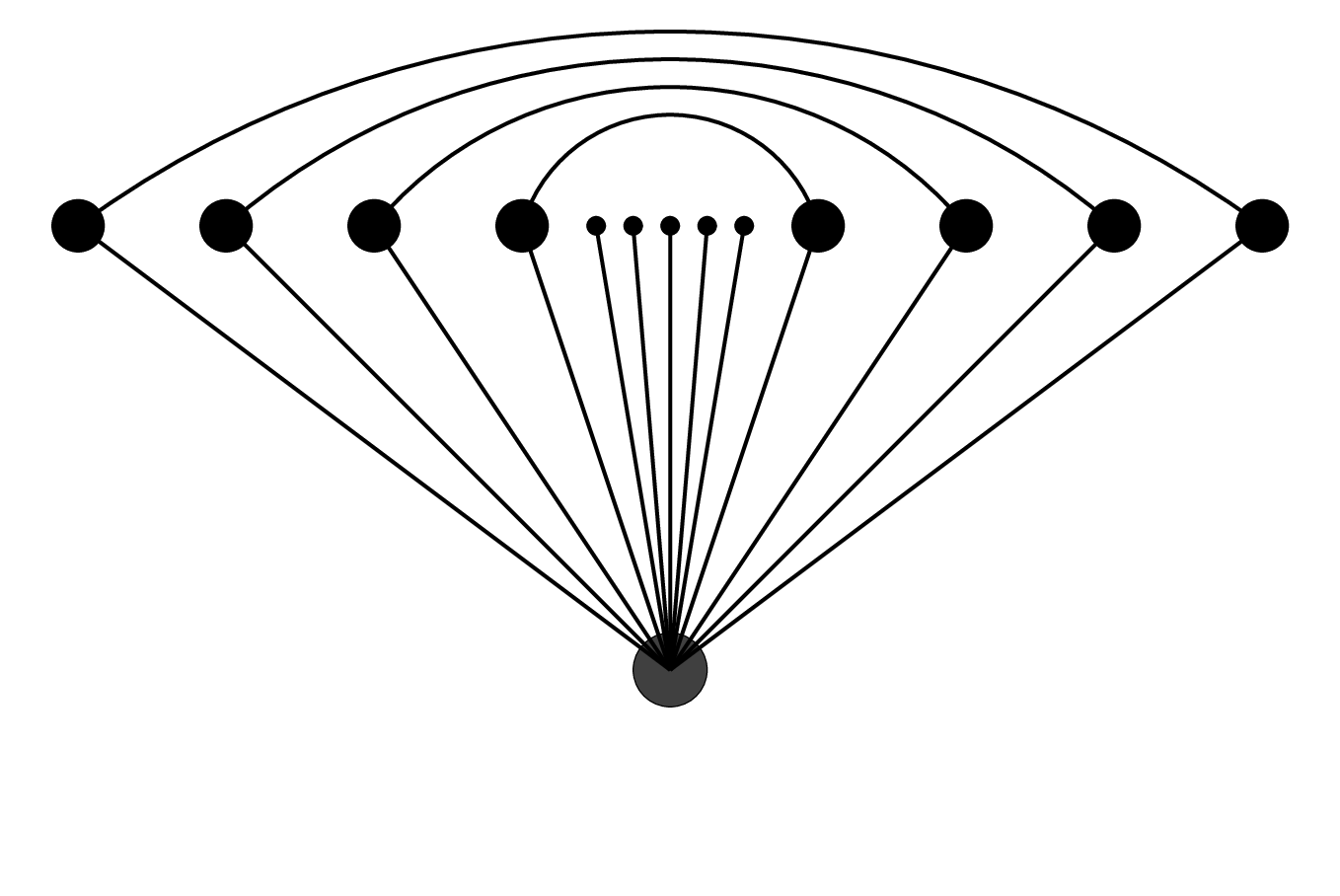}
        \qquad\quad
(b)
\includegraphics[width=0.25\textwidth]{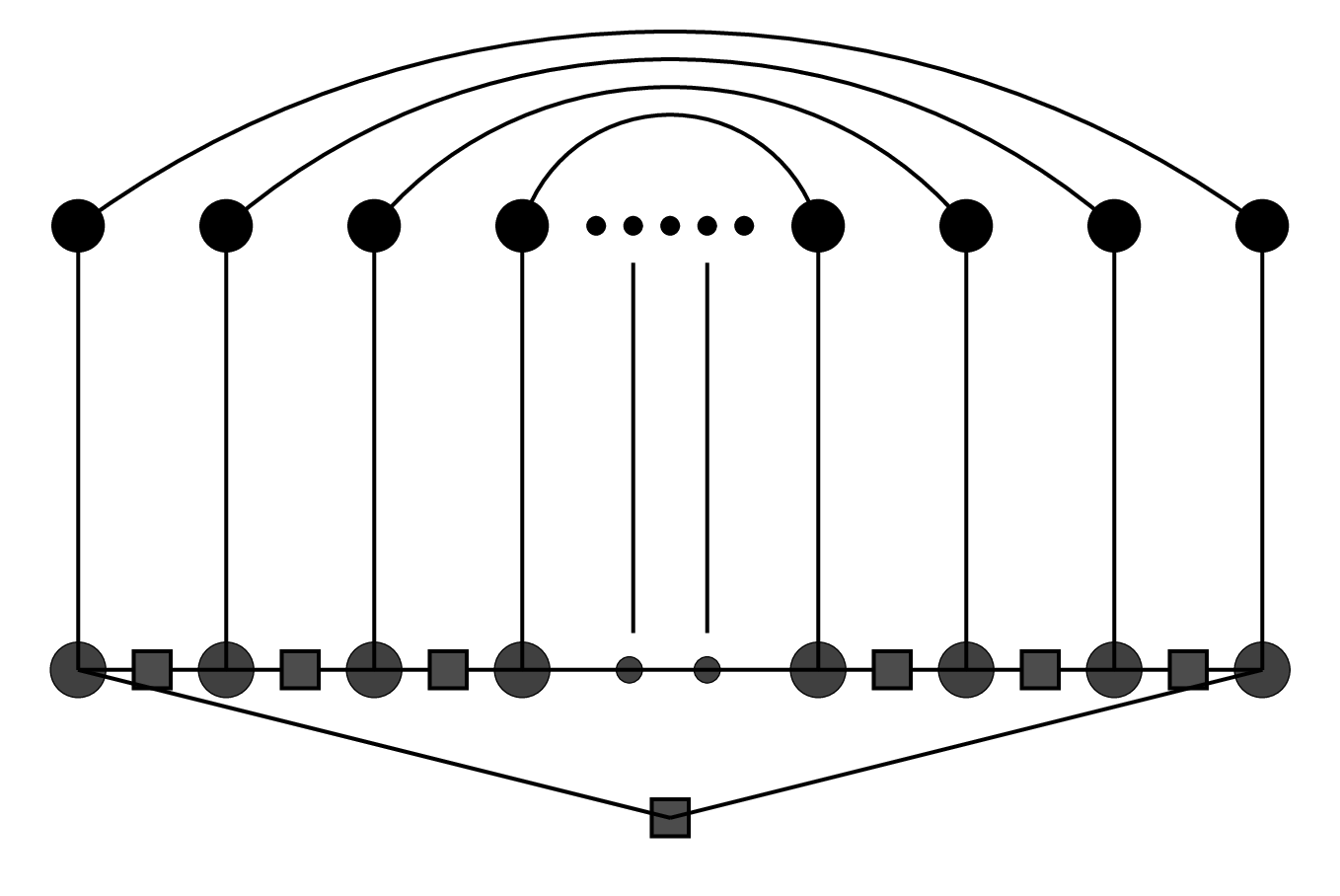}
        \qquad\quad
(c)
\includegraphics[width=0.25\textwidth]{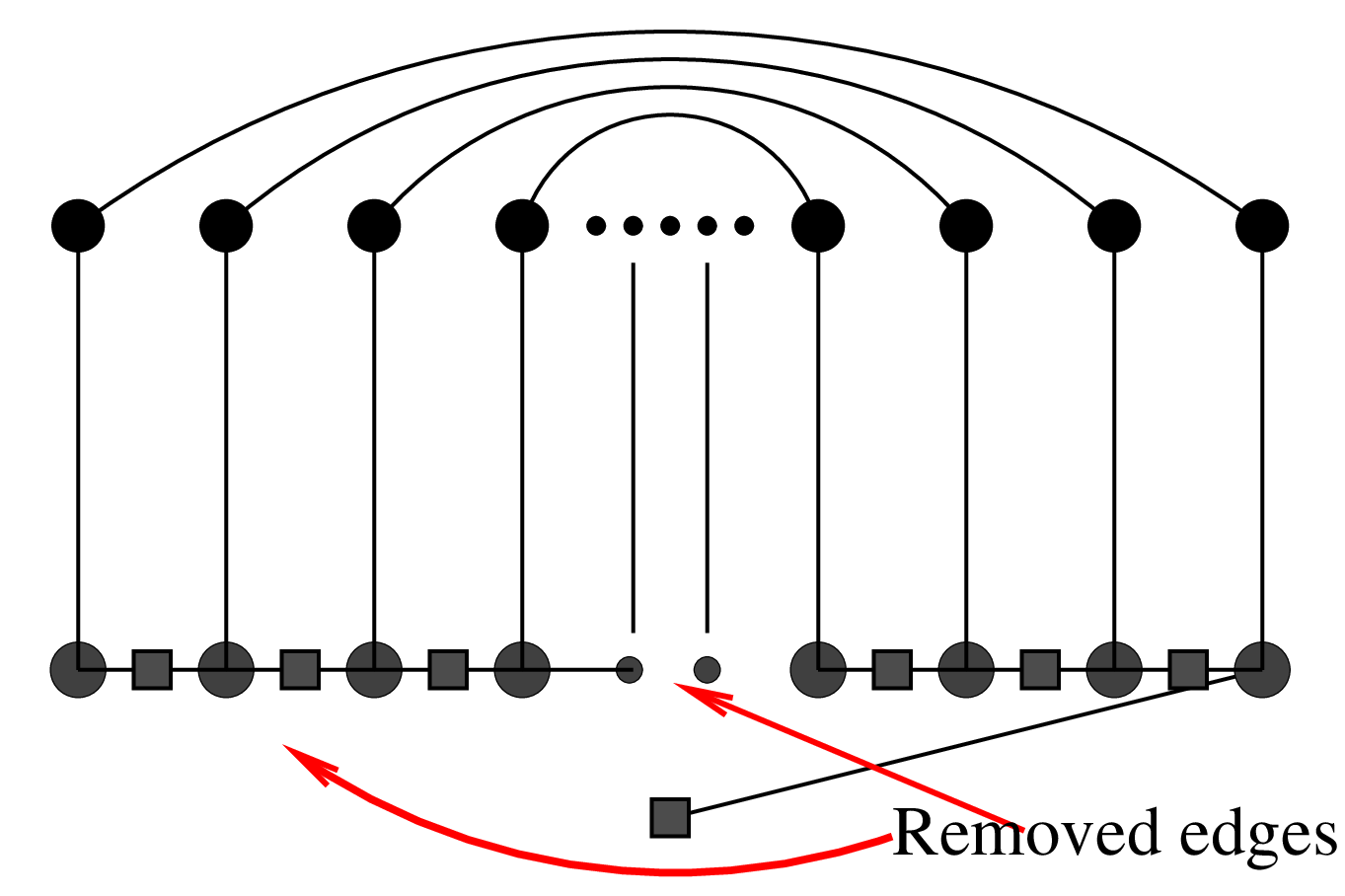}
}
\caption{An example showing that the splitting construction from Figure \ref{fig:splitting-to-deg-3} can reduce the distance from being bipartite. The planar graph in (a) (in which the $i^{\text{th}}$ top vertex from the left is connected by an edge to the $i^{\text{th}}$ top vertex from the right) has $\Theta(n)$ edge-disjoint cycles of length $3$ and is $\eps$-far from bipartite (one has to remove at least $\frac{n-1}{2}$ edges to obtain a bipartite graph). However, after the splitting, the obtained graph (Figure (b)) can be made bipartite just by removal of two edges: Figure (c) depicts a bipartite graph obtained after removal of such two edges: one of the two edges at the bottom and the middle edge in the split part.}
\label{fig:splitting-to-deg-3-with-far}
\end{figure}


The first and possibly the most natural approach to designing a constant-time algorithm for testing bipartiteness in arbitrary planar graphs would be to extend the known constant-time algorithm for \emph{bounded-degree} planar graphs \cite{CSS09}. This could be achieved by first transforming an input planar graph $G$ with an arbitrary maximum-degree into a planar graph $G^\star$ with bounded-degree and then running the tester for $G^\star$ to determine the property for $G$. However, we are not aware of a transformation that would behave well and we do not expect any such transformation to exist.

For example, one can reduce the maximum degree to at most $3$ by splitting every vertex of degree $d > 3$ into $d$ vertices of degree $3$. It is also easy to ensure that this reduction maintains the planarity, and also the property of being bipartite (see Figure \ref{fig:splitting-to-deg-3}). However, there are two properties that are not maintained: one is the distance from being bipartite (see Figure \ref{fig:splitting-to-deg-3-with-far}) and another is that the access to the neighboring nodes requires more than constant time (though this can be ``fixed'' if one allows each vertex to have its adjacency list ordered consistently with some planar embedding). In particular, Figure \ref{fig:splitting-to-deg-3-with-far} depicts an example of a planar graph that is originally $\eps$-far from bipartite, but after the transformation it suffices to remove 2 edges to obtain a bipartite graph.


Another transformation of the graph is considered by Kaufman et al.\ \cite{KKR04}. They replace every high degree vertex with a constant-degree bipartite expander. While they prove that this construction preserves the distance, it is clear that it cannot preserve the planarity, since planar graphs are not expanders. However, for general graphs we know that testing bipartiteness requires $\Omega(\sqrt{n})$ queries \cite{GR97} and we do not know how to exploit the structure of the graph after the transformation.


\section{Preliminaries}
\label{sec:preliminaries}

\paragraph{Bipartiteness.}

A graph is \emph{bipartite} if one can partition its vertex set into two sets $A$ and $B$ such that every edge has one endpoint in $A$ and one endpoint in $B$. We also frequently use the well known fact that a graph is bipartite if and only if it has no odd-length cycle.

We now formally introduce the notion of being far from bipartiteness.\footnote{The standard definition of being $\eps$-far (see, for example, the definition in \cite{KKR04}) expresses the distance as the fraction of edges that must be modified in $G=(V,E)$ to obtain a bipartite graph. Compared to our Definition \ref{def:bipartiteness-testing}, instead of deleting $\eps |V|$ edges, one can delete $\eps |E|$ edges. For any class of graphs with an excluded minor, the number of edges in the graph is upper bounded by $C \cdot |V|$, where $C$ is a constant. Moreover, unless the graph is very sparse (i.e., most of its vertices are isolated, in which case even finding a single edge in the graph may take a large amount of time), the number of edges in the graph is at least $\Omega(|V|)$. Thus, under the standard assumption that $|E| = \Omega(|V|)$, the $\eps$ in our definition and the $\eps$ in the previous definitions remain within a constant factor. We use our definition of being $\eps$-far for simplicity; our analysis can be extended to the standard definition of being $\eps$-far in a straightforward way.} The notion is parameterized by a distance parameter $\eps > 0$.

\begin{definition}
\label{def:bipartiteness-testing}
A graph $G=(V,E)$ is \emph{$\eps$-far from bipartite} if one has to delete more than $\eps |V|$ edges from $G$ to obtain a bipartite graph.
\end{definition}


\paragraph{Property testing.}

We are interested in finding a \emph{property testing algorithm} for bipartiteness in planar graphs, i.e., an algorithm that inspects only a very small part of the input graph, and accepts bipartite planar graphs with probability at least $\frac23$, and rejects planar graphs that are $\eps$-far away from bipartite with probability at least $\frac23$, where $\eps$ is an additional parameter.

Our algorithm always accepts every bipartite graph. Such a property testing algorithm is said to have \emph{one-sided error}.


\paragraph{Access model.}
The access to the graph is given by an \emph{oracle}. We consider the oracle that allows two types of queries:
\begin{itemize}
\item {\it Degree queries:} For every vertex $v \in V$, one can query the degree of $v$.
\item {\it Neighbor queries:} For every vertex $v \in V$, one can query its $i^{\text{th}}$ neighbor.
\end{itemize}
Observe that by first querying the degree of a vertex, we can always ensure that the $i^{\text{th}}$ neighbor of the vertex exists in the second type of query. In fact, in the algorithm that we describe in this paper, the neighbor query can be replaced with a weaker type of query: \emph{random neighbor query}, which returns a random neighbor of a given vertex $v$; each time the neighbor is chosen independently and uniformly at random.

The \emph{query complexity} of a property testing algorithm is the number of oracle queries it makes.


\paragraph{Basic properties of planar graphs.}

We extensively use the following well-known properties of planar graphs. The graph $G' = (V',E')$ obtained by the \emph{contraction} of an edge $(u,v) \in E$ into vertex $u$ is defined as follows: $V' = V \setminus \{v\}$ and $E' = \{(x,y)\in E: x \ne v \land y \ne v\} \cup \{(x,u) : (x,v) \in E \land x \ne u\}$. A graph $G'$ that can be obtained from a graph $G$ via a sequence of edge removals, vertex removals, and edge contractions is called a \emph{minor} of $G$.

We use the following well-known property of planar graphs.

\begin{fact}
\label{fact:planar_minor_planar}
Any minor of a planar graph is planar.
\end{fact}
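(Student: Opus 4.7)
The plan is to induct on the number of minor operations used to obtain $G'$ from $G$. Since a minor is, by definition, obtained from $G$ via a finite sequence of edge removals, vertex removals, and edge contractions, it suffices to show that each of these three operations, applied to a planar graph, produces a planar graph. The result then follows by a trivial induction on the length of the sequence.

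The first two operations are immediate. Fix a planar embedding of $G$ in $\REAL^2$. Deleting an edge $e$ from $G$ corresponds to erasing the open curve representing $e$ from the embedding; the remaining drawing is still a planar embedding of $G \setminus e$. Similarly, deleting a vertex $v$ corresponds to erasing the point representing $v$ together with all (open) curves representing edges incident to $v$; what remains is a planar embedding of $G \setminus v$.

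The only nontrivial case is edge contraction. Given a planar embedding of $G$ and an edge $e = (u,v) \in E$, I would construct a planar embedding of the contracted graph $G'$ as follows. Let $\gamma$ be the Jordan arc representing $e$ in the embedding. Choose a thin open tubular neighborhood $N$ of $\gamma$ that meets no other vertex or edge of $G$ except at the endpoints $u$ and $v$; such an $N$ exists because the embedding consists of finitely many arcs that only meet at their endpoints. Inside $N$, continuously deform the arc $\gamma$ by sliding the point $v$ along $\gamma$ toward $u$, carrying with it short initial segments of all edges incident to $v$; outside $N$, leave the embedding unchanged. This homotopy identifies $u$ and $v$ to a single point, merges the initial segments of the former $v$-edges into new arcs beginning at $u$, and introduces no crossings because the deformation takes place entirely inside $N$, which by construction is disjoint from the rest of the drawing. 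The resulting picture is a planar embedding of $G'$.

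The step I expect to require the most care in a fully rigorous write-up is the existence of the tubular neighborhood $N$ and the verification that the sliding deformation truly introduces no crossings; this is a small piece of elementary planar topology (essentially a consequence of the fact that a finite planar straight-line or piecewise-linear drawing has positive clearance between non-incident features), but it is the only place where anything beyond bookkeeping is used. Once this is in hand, the induction concludes the proof of Fact~\ref{fact:planar_minor_planar}.
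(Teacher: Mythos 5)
The paper does not prove this fact; it cites it as a well-known property of planar graphs, so there is no internal argument to compare against. Your proposal is a correct, standard topological proof: reduce to the three elementary operations, observe that deletions only erase features of a fixed embedding, and handle contraction by sliding $v$ into $u$ along a tubular neighborhood of the arc for $e$. You have also correctly identified where the real content lies --- the existence of the tubular neighborhood and the no-new-crossings check --- and this is indeed the only nontrivial piece. One small remark: for the contraction step it is slightly cleaner (and closer to how many textbooks phrase it) to first shrink $v$ and the arc $\gamma$ into a small disk $D$ around $u$ that meets the drawing only in $\gamma$ and the initial segments of edges at $u$ and $v$, then collapse $D$ to the point $u$; this sidesteps having to describe the deformation as a sliding homotopy. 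Alternatively, one can appeal to Kuratowski/Wagner: a graph is planar iff it has no $K_5$ or $K_{3,3}$ minor, and the minor relation is transitive, so a minor of a planar graph cannot contain $K_5$ or $K_{3,3}$ as a minor and is therefore planar. Either route is acceptable; yours is self-contained and does not rely on the characterization theorem.
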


Furthermore, we use the following upper bound on the number of edges in a simple planar graph, which follows immediately from Euler's formula.

\begin{fact}
\label{fact:planar_limited_edges}
For any simple planar graph $G = (V,E)$ (with no self-loops or parallel edges), $|E| \le 3|V|-6$.
\end{fact}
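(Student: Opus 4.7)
The plan is to derive the bound from Euler's formula in the usual textbook way, namely $|V| - |E| + |F| = 2$ for a connected planar graph drawn in the plane, where $F$ denotes the set of faces of the embedding. I would first reduce to the connected case: if $G$ has $k \geq 1$ connected components, Euler's formula generalizes to $|V| - |E| + |F| = 1 + k$, and since adding edges only makes the inequality $|E| \le 3|V| - 6$ harder to satisfy, it suffices to prove it when $G$ is connected (and, say, $|V| \geq 3$, since the small cases $|V| \in \{1,2\}$ are trivial as a simple graph on at most two vertices has at most one edge).

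Next I would bound faces by edges. Fix any planar embedding of $G$ and, for each face $f$, let $\ell(f)$ denote the number of edge-sides incident to $f$ along its boundary walk. Because the graph is simple (no self-loop and no parallel edges), the boundary walk of every face must traverse at least three distinct edges, so $\ell(f) \geq 3$ for all $f \in F$. On the other hand, every edge contributes exactly two sides in total to the face boundaries (one on each side of the edge), so
\[
3|F| \;\leq\; \sum_{f \in F} \ell(f) \;=\; 2|E|.
\]

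Substituting $|F| \leq \tfrac{2}{3}|E|$ into Euler's formula gives $|V| - |E| + \tfrac{2}{3}|E| \geq 2$, i.e., $|E| \leq 3|V| - 6$, as required. The only step that requires any care is verifying the face-length bound $\ell(f) \geq 3$: in a simple graph a face bounded by one edge would force a self-loop and a face bounded by two edges would force a pair of parallel edges, both of which are excluded by hypothesis, so this step goes through without further obstacle.
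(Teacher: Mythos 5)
The paper states this fact without giving a proof, remarking only that it "follows immediately from Euler's formula," which is precisely the route you take: Euler's formula plus the face-degree bound $\ell(f)\ge 3$ for simple graphs. Your derivation is the standard one and is correct, so there is nothing to reconcile with the paper.

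One small correction to your side remark: the case $|V|=2$ is not trivially true, it is actually a counterexample to the inequality as literally stated, since a simple graph on two vertices may have one edge while $3|V|-6=0$. The bound $|E|\le 3|V|-6$ holds only for $|V|\ge 3$; this implicit restriction is also present in the paper's statement, and is harmless here since the fact is only ever invoked to conclude $|E|=O(|V|)$ (or average degree at most $6$), which is fine for all $|V|$.
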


We remark that for any class of graphs $\mathcal{H}$ that is defined by a finite collection of forbidden minors, similar statements are true, i.e., if $G \in \mathcal{H}$, then any minor of $G$ also belongs to $\mathcal{H}$ and if $G = (V,E) \in \mathcal{H}$, then $G$ has $O(|V|)$ edges (where the constant in the big $O$ notation depends on the set of forbidden minors).


\paragraph{Notation.}

Throughout the paper we use several constants depending on $\eps$. We use lower case Greek letters to denote constants that are typically smaller than $1$ (e.g., $\delta_i(\eps)$) and lower case Latin letters to denote constants that are usually larger than $1$ (e.g., $f_i(\eps)$). All these constants are always positive.
Furthermore, throughout the paper we use the asymptotic symbols $O_{\eps}(\cdot)$, $\Omega_{\eps}(\cdot)$, and $\Theta_{\eps}(\cdot)$, which ignore multiplicative factors that depend only on $\eps$ and that
are positive for $\eps>0$.


\section{Algorithm \RBE}

We first describe our algorithm for testing bipartiteness of planar graphs with arbitrary degree and provide the high level structure of its analysis. Most of the technical details appear in Section \ref{sec:main-proof}.

\medskip
\begin{algo}
\RBE\,$(G,\eps)$:
\begin{itemize}
\item Repeat $f(\eps)$ times:
    \begin{itemize}[$\circ$]
    \item Pick a random vertex $v \in V$.
    \item Perform a random walk of length $g(\eps)$ from $v$.
    \item If the random walk found an odd-length cycle, then \textbf{reject}.
    \end{itemize}
\item If none of the random walks found an odd-length cycle, then \textbf{accept}.
\end{itemize}
\end{algo}

\begin{theorem}
\label{thm:main-bipartiteness}
There are positive functions $f$ and $g$ such that for every planar graph $G$, we have
\begin{itemize}
\item if $G$ is bipartite, then \RBE$(G,\eps)$ accepts $G$, and
\item if $G$ is $\eps$-far from bipartite, then \RBE$(G,\eps)$ rejects $G$ with probability at least $0.99$.
\end{itemize}
\end{theorem}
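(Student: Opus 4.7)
The plan is to prove Theorem~\ref{thm:main-bipartiteness} in the one-sided error style: whenever $G$ is bipartite it contains no odd cycle, so \RBE\ accepts deterministically. The entire work lies in the other direction: showing that from a uniformly random start, a random walk of constant length $g(\eps)$ discovers an odd-length closed walk with probability at least some constant $p(\eps)>0$. Once this is established, setting $f(\eps) = \Theta(1/p(\eps))$ and applying a standard amplification argument drives the failure probability below $0.01$.

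To lower-bound $p(\eps)$, I would first establish a structural lemma: every planar graph $G$ that is $\eps$-far from bipartite contains a collection $\CC$ of at least $\delta(\eps)\,|V|$ pairwise edge-disjoint odd cycles, each of length at most $L(\eps)$, for some constants $\delta(\eps)>0$ and $L(\eps)<\infty$. The idea is to repeatedly extract short odd cycles greedily: while the current graph still requires deletion of more than $\eps|V|/2$ edges to become bipartite, it must contain an odd cycle; using Fact~\ref{fact:planar_limited_edges} together with a counting argument, one can argue that a constant fraction of the odd cycles present in a planar graph far from bipartite must have length $O_\eps(1)$ (otherwise the ``odd-cycle transversal'' would be too small). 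Remove one edge from each extracted cycle and repeat; the process yields the desired family $\CC$.

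The core of the proof is the random-walk analysis, which is the main obstacle. Let $\pG_0$ be the multigraph obtained by taking the union of the cycles in $\CC$ (keeping parallel edges). I would set up an inductive reduction $\pG_0, \pG_1, \dots, \pG_k$ in which each $\pG_{i+1}$ is obtained from $\pG_i$ by contracting/short-cutting each odd cycle of length $\ell$ into an odd cycle of length roughly $\ell/2$ (or $(\ell+1)/2$), while maintaining planarity via Fact~\ref{fact:planar_minor_planar} and maintaining edge-disjointness. After $k = O(\log L(\eps))$ stages the cycles collapse into self-loops, and a random walk starting at an incident vertex traverses a self-loop with probability $\Omega(1/\Delta)$ at each step — but crucially, the fraction of vertices incident to such self-loops is $\Omega_\eps(1)$ by the linear-cycle bound, so a random start produces a self-loop hit with probability $\Omega_\eps(1)$ in a walk of length two. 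The reduction must come with a \emph{probability-preservation lemma}: if in $\pG_{i+1}$ a random walk of length $g_{i+1}$ finds an odd cycle with probability $p$, then in $\pG_i$ a random walk of length $g_i = O(g_{i+1})$ finds one with probability $\lambda(p) = \Omega(p)$. Composing $k$ such lemmas yields $p(\eps) = \lambda^{(k)}(\Omega_\eps(1)) > 0$, and $g(\eps) = O_\eps(1)$.

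The hard part is proving the single-step preservation lemma, because in $\pG_i$ a walker at a high-degree vertex is much more likely to leave the cycle structure than one in $\pG_{i+1}$. The approach will be to couple the walks: a step in $\pG_{i+1}$ across a shortcut edge corresponds to a ``longer detour'' of a walk in $\pG_i$ along the two original half-cycles, and the probability of this detour can be lower-bounded using the bounded-length ($\le L(\eps)$) structure of the cycles and the fact that, conditioned on entering a cycle, the walk completes the detour with probability inverse-polynomial in the maximum degree along that cycle — but the key point is that, averaged over a random start, vertices of very high degree make up only a bounded share of the cycle mass (using Fact~\ref{fact:planar_limited_edges} again), so the constant-factor degradation survives. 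Finally, $G$ itself differs from $\pG_0$ only by the presence of extra edges, which can only \emph{increase} the set of odd-length closed walks; a walk in $G$ from a random vertex thus finds an odd cycle with probability at least $\Omega_\eps(p(\eps))$, completing the proof.
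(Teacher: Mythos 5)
Your skeleton matches the paper's: one-sided error, extract linearly many short edge-disjoint odd cycles, contract them iteratively until they become self-loops, and prove a per-stage probability-preservation lemma. But there are concrete errors. The final sentence---``extra edges can only increase the set of odd-length closed walks; a walk in $G$ thus finds an odd cycle with probability at least $\Omega_\eps(p(\eps))$''---is false. Adding edges to $\pG_0$ to recover $G$ raises degrees, which strictly \emph{decreases} the probability that a random walk follows any fixed trajectory within $\pG_0$: a vertex of degree $2$ in $\pG_0$ but degree $\sqrt{n}$ in $G$ will almost surely eject the walk. The paper handles this with a pruning step (Lemma~\ref{lemma:transformation}) guaranteeing that every non-isolated $v$ in the subgraph satisfies $\deg_{H}(v)=\Omega_\eps(\deg_G(v))$, so the per-step degradation from $H$ to $G$ is $O_\eps(1)$ (Lemma~\ref{lemma:basic}); this degree-comparability must be re-enforced after every contraction stage, not just at the start. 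Likewise, the short-odd-cycle extraction does not follow from your greedy-plus-counting sketch: deleting one edge per found odd cycle gives neither edge-disjointness nor any length bound, and Euler's formula bounds edge count, not odd girth. The paper invokes the Klein--Plotkin--Rao decomposition (Lemma~\ref{lemma:partition-into-conn-comp-of-small-diam}): remove $\eps|V|/2$ edges so every component has $G$-diameter $O(1/\eps^2)$, note the remainder is still $\eps/2$-far, hence some component is non-bipartite and a BFS yields an odd cycle of length $O(1/\eps^2)$. Some such diameter-reduction theorem is needed here.

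Your treatment of the preservation lemma---``inverse-polynomial in the maximum degree along that cycle,'' fixed by averaging over random starts---does not address the actual obstacle, which arises mid-walk, not at the start: a single high-degree vertex on the detour absorbs the walk regardless of where it began. The paper's machinery is considerably more delicate: it tracks a contracted multigraph $\mathcal G_P(C)$ with parallel edges so that many cycles sharing a vertex-pair aggregate into a high-multiplicity edge (this is exactly why a fixed cycle can be found only with probability $n^{-\Omega_\eps(1)}$ yet some cycle is found with constant probability); it only contracts at \emph{well-contractible} vertices, whose incident cycles all pass through the same two neighbors with the same edge parities, so the two-step detour reduces to a fair coin flip; it insists these vertices form an independent set so the contractions are simultaneous; and it contracts one edge per cycle per stage (losing a factor $3$ in walk length per stage over $k-1=O_\eps(1)$ stages), rather than halving lengths. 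Without these ideas the preservation lemma has no proof, and that lemma is the heart of the theorem.
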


We first observe that the first claim is obvious: if $G$ is bipartite, then every cycle in $G$ is of even length and hence \RBE{} always accepts. Thus, to prove Theorem~\ref{thm:main-bipartiteness}, it suffices to show that if $G$ is $\eps$-far from bipartite, then \RBE{} rejects $G$ with probability at least $0.99$. Therefore, from now on, we assume that the input graph $G$ is $\eps$-far from bipartite for some constant $\eps>0$. Furthermore, note that it suffices to show that a \emph{single} random walk of length $O_{\eps}(1)$ finds an odd-length cycle with probability $\Omega_{\eps}(1)$. Indeed, for any functions $g$ and $f$, if a random walk of length $g(\eps) = O_{\eps}(1)$ finds an odd-length cycle with probability at least $5/f(\eps) = \Omega_{\eps}(1)$, then this implies that $f(\eps)=O_{\eps}(1)$ independent random walks detect at least one odd-length cycle with probability at least $1-(1-5/f(\eps))^{f(\eps)} \ge 1-e^{-5} \ge 0.99$. Therefore, in the remainder of the paper, we analyze algorithm \RW$(G,t)$ below. We have to prove that there is $t=t(\eps)=O_{\eps}(1)$ such that for every planar graph $G$ that is $\eps$-far from bipartite, \RW$(G,t)$ finds an odd-length cycle with probability $\Omega_{\eps}(1)$. This implies Theorem \ref{thm:main-bipartiteness}.

\begin{algo}
\RW\,${(G,t)}$:
\begin{itemize}
\item  Pick a random vertex $v \in V$.
\item  Perform a random walk of length $t$ from $v$.
\item  If the random walk found an odd-length cycle, then \textbf{reject}.
\item  If not, then \textbf{accept}.
\end{itemize}
\end{algo}

\subsection{Outline of the analysis of \RW{} when $G$ is $\epsilon$-far from bipartite}
\label{subsec:outline}

Because of the arguments presented above, the remainder of the paper deals with the main technical challenge of our result: proving that our algorithm \RW{} finds with sufficient probability an odd-length cycle in any planar graph $G=(V,E)$ that is $\eps$-far from bipartite. To this end, we find a subgraph $H$ of $G$ that has the properties stated in the following lemma:

\begin{lemma}
\label{lemma:ExistenceOfH}
For every $\epsilon \in (0,1)$, there is a $t=t(\epsilon)>0$ such that
for every planar graph $G=(V,E)$ that is $\epsilon$-far from bipartite,
there exists a subgraph $H=(V,E')$, $E'\subseteq E$, of $G$ with the following properties:
\begin{itemize}
\item[(a)] if \ \RW{($H,t$)} finds an odd-length cycle in $H$ with probability $\Omega_{\eps}(1)$, then \RW{($G,t$)} finds an odd-length cycle in $G$ with probability $\Omega_{\eps}(1)$, and
\item[(b)] \RW{($H,t$)} finds an odd-length cycle in $H$ with probability $\Omega_{\eps}(1)$.
\end{itemize}
\end{lemma}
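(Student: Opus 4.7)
The plan is to construct $H$ as the union of a carefully chosen collection of short edge-disjoint odd-length cycles of $G$, and then reduce the analysis of random walks in $H$ to the iterated cycle-shortening argument that the paper develops in later sections.

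First I would establish the combinatorial foundation. Because $G$ is $\epsilon$-far from bipartite, any maximum collection of edge-disjoint odd cycles has size at least $\epsilon |V|$: otherwise removing one edge per cycle would destroy all odd cycles while deleting fewer than $\epsilon |V|$ edges, contradicting Definition \ref{def:bipartiteness-testing}. Planarity (Fact \ref{fact:planar_limited_edges}) bounds $|E|$ by $3|V|-6$, so the average length of these cycles is $O(1/\epsilon)$, and a straightforward Markov-type argument extracts a subcollection $\mathcal{C}$ of $\Omega_\epsilon(|V|)$ edge-disjoint odd cycles each of length at most some constant $L(\epsilon)$.

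Next I would let $H$ be the subgraph $(V(G),\bigcup_{C\in \mathcal{C}} E(C))$ of $G$, possibly further refined (for example, restricted to those vertices along which both the $H$-degree and the $G$-degree are comparable) so that for every vertex $v$ visited by the random walks, $d_H(v) \ge \alpha \cdot d_G(v)$ for some $\alpha = \alpha(\epsilon) > 0$. With this property, part~(a) follows from a coupling: at each step from a vertex $v$, a step of \RW{($G,t$)} follows the same edge as the corresponding step of \RW{($H,t$)} with probability $d_H(v)/d_G(v)\ge \alpha$, so the full $t=O_\epsilon(1)$-step walk is mirrored with probability at least $\alpha^t = \Omega_\epsilon(1)$. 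Any odd cycle discovered in $H$ is then discovered in $G$ at only a constant-factor loss.

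For part~(b), which is the heart of the lemma, I would view $H$ as a multigraph $\pG$ whose ``super-edges'' are the cycles of $\mathcal{C}$ and apply the cycle-shortening reduction iteratively. Each iteration replaces the current multigraph by one whose odd cycles are essentially halved in length, with the property that the probability a constant-length random walk discovers an odd cycle changes by at most a constant factor. After $O(\log L(\epsilon))=O_\epsilon(1)$ iterations the cycles collapse to self-loops, which a random walk detects in a single step from an endpoint. Composing the $O_\epsilon(1)$ constant-factor losses yields an $\Omega_\epsilon(1)$ lower bound on the probability that \RW{($H,t$)} finds an odd cycle, for a suitable $t=t(\epsilon)=O_\epsilon(1)$.

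The principal obstacle is the cycle-shortening reduction itself: arguing that a single step in the shortened multigraph $\pG'$ can be simulated by a constant-length walk in $\pG$ without incurring more than a constant-factor loss. The difficulty is that when the walk traces a subpath of a long cycle, at each intermediate vertex it may escape via edges belonging to other cycles sharing that vertex; bounding this escape probability and controlling how the walk's distribution evolves under repeated reductions, all without invoking rapid mixing (which fails because planar graphs have good separators), is precisely where the new analytical techniques advertised in the abstract must be deployed.
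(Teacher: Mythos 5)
Your high-level architecture matches the paper's: build $H$ as a union of $\Omega_\eps(|V|)$ short edge-disjoint odd cycles, prune so that every non-isolated vertex retains a constant fraction of its $G$-degree (this gives part~(a) exactly as you argue), and then reduce part~(b) to an iterated cycle-shortening argument. The degree-preservation coupling for part~(a) is essentially the paper's Lemma~\ref{lemma:basic}, and your recognition that the shortening reduction is the genuine technical heart is also correct.

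However, there is a genuine gap in your ``combinatorial foundation'' step, and it is not a detail. You claim that if the maximum collection of edge-disjoint odd cycles has size less than $\eps|V|$, then ``removing one edge per cycle would destroy all odd cycles.'' This is false: a maximum edge-disjoint packing of odd cycles does not supply a small edge set meeting every odd cycle. Already in $K_4$ the maximum edge-disjoint triangle packing has size one, yet removing a single edge from that triangle leaves three other triangles intact; a packing bound alone gives no covering bound. So $\eps$-farness does not directly yield a linear collection of edge-disjoint odd cycles by this reasoning, and the subsequent Markov argument for shortness is moot until the collection exists. The paper avoids this trap entirely: Lemma~\ref{lemma:first_reduction} builds the collection \emph{iteratively} via the Klein--Plotkin--Rao decomposition (Lemma~\ref{lemma:partition-into-conn-comp-of-small-diam}). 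As long as fewer than $\eps|V|/(2k)$ short odd cycles have been removed, the remaining graph stays $\eps/2$-far from bipartite; KPR then decomposes it into low-diameter components, one of which must be non-bipartite, and a BFS-tree argument extracts a new odd cycle of length $O(1/\eps^2)$. This yields both the linear count \emph{and} the uniform constant length bound simultaneously, neither of which your packing-plus-Markov route establishes. A secondary, more minor point: the paper's reduction decrements the cycle length by exactly one per iteration (hence $k-1=O(\eps^{-2})$ iterations), not a halving; both are $O_\eps(1)$, but your ``$O(\log L(\eps))$'' intuition does not match how the contraction actually operates.
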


If such a subgraph $H$ always exists, these properties immediately imply that \RW{$(G,t)$} finds an odd-length cycle in $G$ with probability $\Omega_{\eps}(1)$ and so by the discussion above, Theorem~\ref{thm:main-bipartiteness} follows.

In order to prove the existence of $H$, we construct a series of subgraphs $G \supseteq H_1 \supseteq \dots \supseteq H_k$ for some $k=k(\epsilon)$. Each $H_i$, $1 \le i \le k$, satisfies property (a) and $H:=H_1$ also satisfies property (b). The entire series $H_1$, \dots, $H_k$ provides a proof that this is the case.

We begin by constructing graph $H_1$, which consists of a union of $\Omega_{\eps}(n)$ edge-disjoint short odd-length cycles from $G$. In order to construct $H_1$ we use the Klein-Plotin-Rao decomposition theorem, which shows that one can remove $\epsilon |V|/2$ edges from $G$ to partition $G$ into connected components such that the every pair of vertices from the same component has distance $O_{\eps}(1)$ in $G$. Then we show that there exists a component that contains a short odd-length cycle. We remove the cycle from the graph and repeat this process as long as we find a short odd-length cycle. The set of removed cycles forms a graph $H_1'$, which after some further processing to satisfy property (a) is turned into $H_1$. The detailed construction appears in Section \ref{subsec:constructing-H1}.

Then we design a reduction that takes a graph $H_i$ consisting of a set $C_i$ of $\Omega_{\eps}(n)$ edge-disjoint short odd-length cycles and constructs from it a subgraph $H_{i+1}$ that consists of a subset $C_{i+1}$ of $C_i$ of $\Omega_{\eps}(n)$ cycles. The new subgraph $H_{i+1}$ is supposed to approximately ''inherit'' the properties of the random walk from $H_i$.
To guide our process, we associate with each $H_i$ a certain multigraph $\mathcal H_i$ that is constructed from $H_i$ by performing edge contractions while keeping parallel edges and self-loops. The purpose of $\mathcal H_i$ is to approximate \emph{how the random walk behaves at a larger scale}, i.e., one step of a (weighted) random walk in $\mathcal H_i$ corresponds to multiple steps in $H_i$. Each vertex $u$ of $\mathcal H_i$ corresponds to a subset of vertices from $V$ that contains $u$. Furthermore, these subsets are disjoint and form a partition of $V$. Vertex $u$ from $\mathcal H_i$ can be thought of as a representative of the corresponding set. In particular, a random walk in $\mathcal H_i$ chooses its starting vertex $u$ with probability proportional to the size of its represented subset. Edges in $\mathcal H_i$ represent paths in $H_i$; the paths represented by parallel edges
may intersect. Parallel edges are also taken into account when we perform a random walk. The probability to move from $u$ to $v$ is proportional to the number of parallel edges $(u,v)$.

Let us recall that $H_i$ consists of a set of short odd-length cycles $C_i$. The construction of $\mathcal H_i$ ensures that every cycle in $C_i$ also corresponds to a cycle in $\mathcal H_i$. Clearly, the length of the corresponding cycle in $\mathcal H_i$ is at most the length of the cycle in $H_i$ but due to edge contractions it may become lower. When we are constructing $H_{i+1}$ from $H_i$ by removing cycles from $C_i$ to obtain $C_{i+1}$, we ensure that for every cycle from $C_{i+1}$ the length in $\mathcal H_{i+1}$ is shorter than the length of the cycle in $\mathcal H_i$ (in addition to preserving property (a)). Thus, after $k-1$ steps (when $k$ is the maximum initial length of a cycle) all cycles are contracted to self-loops. But then, if $\mathcal H_k$ properly approximates the behavior of the random walk at larger scale, a cycle will be detected. Furthermore, if the random walk properties of $H_{k-1}$, \dots, $H_1$ are approximately those of $H_k$, we obtain that $H_1$ satisfies both property (a) and (b), which finishes the proof.

While the above construction outlines the main line of thought, multiple details have to be taken care of. For instance, during the edge contractions we lose information about the original lengths of the paths and so an odd-length cycle in $\mathcal H_i$ may correspond to an even length set of cycles in $H_i$.


\section{Analysis of \RW{} when $G$ is $\epsilon$-far from bipartite}
\label{sec:main-proof}

The main proof, our analysis of \RW{} when $G$ is $\epsilon$-far from bipartite, can be kept relatively short, if we assume that the reduction from $H_i$ to $H_{i+1}$ works as supposed. Proving this reduction is the key technical challenge. We therefore first present the complete proof except for Lemma \ref{lemma:second_reduction}, whose proof appears in Section \ref{sec:proof-of-lemma:second_reduction}. Our proof follows the general outline sketched in the previous section. Our first step is to show how to obtain our starting subgraph $H_1$.


\subsection{Finding the first subgraph $H_1$}
\label{subsec:constructing-H1}

We start by taking a closer look at property (a) of Lemma~\ref{lemma:ExistenceOfH} and we show that it is implied by a simple condition on the degrees of the vertices in $H$, namely, the degree of each vertex is either $0$ or no more than a constant factor smaller than its corresponding degree in $G$. Our construction of $H_i$ in the remainder of the paper satisfies an alternate property (a'), defined below, which implies property (a).

\begin{lemma}
\label{lemma:basic}
Let $G = (V,E)$ be a graph, $t=\Theta_{\eps}(1)$ be the length of a random walk, and $H$ be a subgraph of $G$ on vertex set $V$ such that the following property holds:
\begin{itemize}
\item[(a')] for every vertex $v\in V$, either $\deg_H(v) = 0$ or $\deg_H(v) = \Omega_{\eps}(\deg_G(v))$.
\end{itemize}
Then property (a) of Lemma \ref{lemma:ExistenceOfH} is satisfied for a random walk of length $t$.
\end{lemma}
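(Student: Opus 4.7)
The plan is to prove property (a) by a direct coupling between the walks $\text{RW}(G,t)$ and $\text{RW}(H,t)$, losing only a multiplicative constant in the success probability. First I would let $\alpha = \alpha(\eps) = \Omega_\eps(1)$ be the constant supplied by hypothesis (a'), so that $deg_H(v) \ge \alpha \cdot deg_G(v)$ holds for every vertex $v \in V$ with $deg_H(v) > 0$. Since $t = \Theta_\eps(1)$, the quantity $\alpha^t$ is then $\Omega_\eps(1)$, and this will be exactly the factor of loss between the two walks.

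Next I would couple the two processes as follows. Both start at the same uniformly random vertex $v \in V$. At each subsequent step, when the walk on $G$ sits at a vertex $u$, sample its next vertex $u'$ uniformly from $N_G(u)$; equivalently, flip a biased coin which lands \emph{in-$H$} with probability $deg_H(u)/deg_G(u)$ (and with probability $0$ when $deg_H(u) = 0$), and conditional on this outcome draw $u'$ uniformly from $N_H(u)$ or from $N_G(u) \setminus N_H(u)$, respectively. The coupled walk on $H$ takes the same move whenever the coin lands in-$H$. Let $B$ be the event that $deg_H(v) > 0$ and that all $t$ coin flips land in-$H$. Whenever $B$ occurs, the two walks trace exactly the same sequence of vertices and edges, so they discover the same odd-length cycles (if any).

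For the probability estimate, observe that under the coupling, whenever all flips so far have landed in-$H$, the current vertex $u$ lies in the support of $H$ and therefore satisfies $deg_H(u)/deg_G(u) \ge \alpha$. Thus $\Pr{B \mid v} \ge \alpha^t$ for every starting vertex $v$ with $deg_H(v) > 0$. Averaging over the uniform choice of $v$, and noting that vertices with $deg_H(v) = 0$ contribute zero to $\Pr{\text{RW}(H,t) \text{ finds an odd cycle}}$ as well (from such a vertex the walk on $H$ cannot move at all), one obtains
\[
\Pr{\text{RW}(G,t) \text{ finds an odd cycle}} \;\ge\; \alpha^t \cdot \Pr{\text{RW}(H,t) \text{ finds an odd cycle}},
\]
which is property (a) up to the constant $\alpha^t = \Omega_\eps(1)$.

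The argument is essentially routine; the only subtlety is handling the isolated vertices of $H$ correctly, namely that the uniform starting distributions in the two walks are identical and that vertices with $deg_H(v) = 0$ can be ignored simultaneously in both walks. No further structural property of $H$ beyond hypothesis (a') is used, which is why the lemma does not require planarity or any other graph-theoretic feature of $H$.
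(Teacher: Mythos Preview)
Your argument is correct and is essentially the paper's proof recast in coupling language: the paper simply fixes each walk $w$ in $H$ that finds an odd cycle, observes that $\Pr_G[w] \ge \alpha^t \Pr_H[w]$ because each step loses at most a factor $\alpha$, and sums over all such $w$. One small looseness in your write-up is the final ``averaging'' step: the bound $\Pr{B \mid v} \ge \alpha^t$ on its own does not yield the conclusion (conditioning on $B$ does not make $W_G$ distributed as $W_H$); what you actually need---and what your coupling gives once you condition on the entire $H$-walk rather than only its starting vertex---is $\Pr{B \mid W_H = w} \ge \alpha^t$ for every walk $w$ in $H$, which is precisely the per-walk comparison the paper carries out directly.
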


\begin{proof}
Consider a single walk $w$ in $H$ of length $t$ that finds an odd-length cycle. Since $H$ is a subgraph of $G$, the same walk exists in $G$. Furthermore, every vertex visited in $w$ must have $\deg_H(v) = \Omega_{\eps}(\deg_G(v))$. Therefore, at every step, the probability of following $w$ decreases in $G$ by at most a factor of $O_\eps(1)$, compared to $H$. Overall the probability of $w$ decreases by at most a factor of $\left(O_\eps(1)\right)^t = \left(O_\eps(1)\right)^{O_\eps(1)} =  O_\eps(1)$. Summing up over all such walks proves the lemma.
\end{proof}

We now proceed with the construction of $H_1$. We first construct a subgraph of $G$ that is a union of $\Omega_{\eps}(n)$ short edge-disjoint cycles. Then we modify it so that it satisfies property (a') and thus property (a).

We make use of the following Klein-Plotkin-Rao decomposition theorem \cite{KPR93}.

\begin{lemma}{\rm\textbf{\cite{KPR93}}}
\label{lemma:partition-into-conn-comp-of-small-diam}
Let $G = (V,E)$ be a planar graph and let $\delta$ be a parameter, $0 < \delta < 1$. There is a set of at most $\delta |V|$ edges in $G$ whose deletion decomposes $G$ into connected components, where the distance (in the original graph $G$) between any two nodes in the same component is $O(1/\delta^2)$.
\end{lemma}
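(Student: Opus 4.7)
My plan is to perform two rounds of the random-shift BFS decomposition of Klein, Plotkin, and Rao (KPR), specialised to planar (equivalently, $K_{3,3}$-minor-free) graphs. Each round deletes a $1/p$ expected fraction of the edges for a period parameter $p$, and the KPR recursion ensures that after two rounds each resulting connected component has weak diameter $O(p^2)$ measured in $G$. Choosing $p = \Theta(1/\delta)$ then gives both the edge budget $\delta |V|$ and the distance bound $O(1/\delta^2)$ claimed in the lemma.

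Concretely, set $p = \lceil 24/\delta \rceil$. In Round~1, pick any root $r_0 \in V$, compute BFS layers $L_0, L_1, \ldots$ from $r_0$ in $G$, draw a uniformly random offset $j_1 \in \{0, \ldots, p-1\}$, and delete every edge between $L_i$ and $L_{i+1}$ whenever $i \equiv j_1 \pmod{p}$. Each edge is cut with probability at most $1/p$, so using $|E| \le 3|V|$ from Fact~\ref{fact:planar_limited_edges} the expected number of Round-1 cuts is at most $3|V|/p \le (\delta/8)\,|V|$. In Round~2, inside every connected component $C$ surviving Round~1, pick a root $r_C \in C$, run a BFS inside $C$, and apply the same random-offset cut at period $p$ with a fresh offset $j_2$. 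Summing over all $C$ the expected Round-2 cuts are again at most $(\delta/8)\,|V|$, so the total expectation is $(\delta/4)\,|V|$, and Markov's inequality pins down a deterministic choice of $(j_1, j_2)$ deleting at most $\delta|V|$ edges.

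The main obstacle is proving that the two rounds together yield weak diameter $O(p^2)$ in $G$. One round alone does not suffice: two vertices in a common BFS band from $r_0$ are only guaranteed to be joined in $G$ through $r_0$ along a path whose length scales with the band's depth, not with $p$. The KPR argument extracts the needed bound from the $K_{3,3}$-minor-free structure of planar graphs: one shows by contradiction that inside any Round-1 component $C$ the BFS from $r_C$ can reach $O(p)$ weak depth only, since otherwise three vertex-disjoint long BFS geodesics from $r_C$ in $C$, together with short connecting segments through the contracted BFS tree above $C$ rooted at $r_0$, would form a $K_{3,3}$-minor of $G$ and contradict Fact~\ref{fact:planar_minor_planar}. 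Granting this minor-extraction step—which is the technically delicate core of the argument—the contribution of Round~2 to the weak diameter inside each Round-1 component is $O(p)$, and the analogous first-round bound contributes another factor of $O(p)$, totalling $d_G(u,v) = O(p^2) = O(1/\delta^2)$ for any two vertices $u, v$ in a common final component, as required.
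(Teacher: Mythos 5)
The paper does not prove this lemma --- it cites it as a known theorem from Klein, Plotkin, and Rao \cite{KPR93}, so there is no in-paper proof to compare your attempt against. Your sketch does follow the right general template for re-deriving KPR (repeated random-shift BFS banding, an excluded-minor argument to bound the weak diameter of the surviving clusters, and Markov's inequality to fix a good pair of offsets), and the edge-budget accounting in your first two paragraphs is routine and correct.

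That said, you have not proven the lemma. The entire difficulty of KPR lies in the minor-extraction step that bounds the weak diameter, and you explicitly grant yourself that step (``the technically delicate core of the argument''), leaving only the easy accounting. Moreover, the structure you gesture at does not obviously assemble into a $K_{3,3}$: ``three vertex-disjoint long BFS geodesics from $r_C$'' plus ``short connecting segments through the contracted BFS tree above $C$'' describes one side of a bipartition plus a single spine, not six branch sets arranged so that each of three is adjacent to each of the other three. There is also a count-of-rounds concern: the KPR argument for $K_{r,r}$-minor-free graphs is run for $r$ rounds of banding, each round supplying one ``row'' of the $K_{r,r}$ that is extracted when a cluster refuses to shrink, so for planar ($K_{3,3}$-minor-free) graphs one expects three rounds, and it is far from clear that two rounds can produce a $K_{3,3}$ contradiction (with two rounds the natural obstruction one can build is a $K_{2,2}$, which is no obstruction at all). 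As a side remark, a sharp invocation of KPR actually yields weak diameter $O(1/\delta)$ for planar graphs, not merely the $O(1/\delta^2)$ stated; the lemma is stated loosely, which is harmless for the paper's purposes.
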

Lemma \ref{lemma:partition-into-conn-comp-of-small-diam} helps us realize the first part of the plan: showing that there are many short edge-disjoint odd-length cycles in a highly non-bipartite graph.

\begin{lemma}[Many short odd-length cycles]
\label{lemma:first_reduction}
In every planar graph $G=(V,E)$ that is $\eps$-far from bipartite, there exists a collection $C$ of \ $\Omega_{\eps}(|V|)$ edge-disjoint odd-length cycles of length at most $k=k(\eps)=O(\eps^{-2})$.
\end{lemma}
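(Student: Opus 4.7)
The plan is to construct $C$ greedily via repeated application of the Klein-Plotkin-Rao decomposition. I maintain a set $E_0$ of edges already used by previously selected cycles (initially $E_0 = \emptyset$) and at each iteration consider the planar subgraph $G' := G \setminus E_0$. As long as $|E_0| \le (\eps/2)|V|$, the graph $G'$ is still at least $(\eps/2)$-far from bipartite. I apply Lemma~\ref{lemma:partition-into-conn-comp-of-small-diam} to $G'$ with $\delta = \eps/4$, obtaining a set $E_{\mathrm{KPR}}$ of at most $(\eps/4)|V|$ edges whose removal splits $G'$ into components of diameter $D = O(\eps^{-2})$.

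Since $G'$ is $(\eps/2)$-far from bipartite while $|E_{\mathrm{KPR}}| \le (\eps/4)|V|$, the graph $G' \setminus E_{\mathrm{KPR}}$ is still not bipartite, so some component $K$ contains an odd cycle. A BFS inside $K$ from any root has depth at most $D$; because $K$ is not bipartite, some non-tree edge must join two vertices at the same BFS level, and combining it with the two tree paths from the root yields an odd cycle in $K$ of length at most $2D+1 = O(\eps^{-2})$. I add this cycle to the collection $C$, add its edges to $E_0$, and iterate.

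Each iteration adds at most $2D+1 = O(\eps^{-2})$ edges to $E_0$, and the loop proceeds while $|E_0| \le (\eps/2)|V|$, so the number of cycles produced is at least $\Omega(\eps|V|/\eps^{-2}) = \Omega(\eps^3|V|) = \Omega_{\eps}(|V|)$. Edge-disjointness is automatic because cycle edges are never reused, and taking $k(\eps) := 2D+1 = O(\eps^{-2})$ gives the stated bound on cycle length. The one subtle point I anticipate is the distinction between \emph{weak} diameter (distances in the ambient graph) and \emph{strong} diameter (distances using only edges inside the component $K$): the BFS-in-component argument above needs the latter. Fortunately, the Klein-Plotkin-Rao theorem for planar graphs does produce strong-diameter decompositions, so this is a matter of invoking the correct form of Lemma~\ref{lemma:partition-into-conn-comp-of-small-diam} rather than a genuine obstruction.
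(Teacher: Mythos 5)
Your greedy peeling strategy is exactly the paper's: maintain a set $E_0$ of used edges, note that as long as $|E_0|\le(\eps/2)|V|$ the residual $G^\star=G\setminus E_0$ is $\eps/2$-far from bipartite, apply Klein--Plotkin--Rao, find a non-bipartite component, extract a short odd cycle, and repeat; the iteration count $\Omega_{\eps}(|V|)$ and the $k=O(\eps^{-2})$ length bound come out the same way. The single place where your argument departs from the paper's is the point you yourself flag and then wave away, and it is a genuine gap: Lemma~\ref{lemma:partition-into-conn-comp-of-small-diam} as stated (and the classical Klein--Plotkin--Rao theorem it cites) bounds only the \emph{weak} diameter of each component, i.e.\ distances measured in the ambient graph $G^\star$, whereas your ``BFS inside $K$ has depth at most $D$'' step needs the \emph{strong} diameter of $K$ to be small. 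Your claim that KPR for planar graphs automatically gives strong-diameter decompositions is not supported by the lemma you have available, so as written that step does not follow.

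The paper sidesteps the issue cleanly, and the fix is small. Build the BFS tree from $v\in C_H$ in $G^\star$, not inside $C_H$. Non-bipartiteness of $C_H$ still forces the existence of $u_1,u_2\in C_H$ at the same BFS level joined by an edge of $H$ (otherwise the parity of the BFS level would $2$-color $C_H$), and the weak-diameter guarantee says $u_1,u_2$ are at $G^\star$-distance at most $k'=O(1/\eps^2)$ from $v$. Splicing the two BFS tree paths with the edge $(u_1,u_2)$ gives an odd cycle of length at most $2k'+1$ that lives in $G^\star$; it may leave the component $C_H$ and may even use edges that KPR cut, but it never uses edges of $E_0$, so edge-disjointness is preserved. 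Replace your BFS-inside-$K$ with this BFS-in-$G^\star$ and your proof becomes essentially identical to the paper's.
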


\begin{proof}
We find the cycles one by one. Suppose that we have already found in $G$ a set of $\ell$ edge-disjoint odd-length cycles of length at most $k=k(\eps)$ each, where $\ell < \frac{\eps |V|}{2k}=\Omega_{\eps}(|V|)$. We show the existence of one more such cycle, which by induction yields the lemma. Let $G^\star$ be the subgraph of $G$ obtained by removing the $\ell$ edge-disjoint odd-length cycles of length at most $k$ each. Since $\ell < \frac{\eps |V|}{2k}$, $G^\star$ is obtained by removing less than $\frac{\eps |V|}{2}$ edges, and hence $G^\star$ is $\eps/2$-far from bipartite. Apply Lemma \ref{lemma:partition-into-conn-comp-of-small-diam} to $G^\star$ with $\delta = \frac{\eps}{2}$ and let $H$ be the resulting graph. Since $G^\star$ is $\eps/2$-far from bipartite, $H$ is not bipartite. Let us consider a connected component $C_H$ of $H$ that is not bipartite and let $v$ be a vertex from $C_H$. Build a BFS tree from $v$ in $G^\star$. Since $C_H$ is not bipartite, there must exist two vertices $u_1$ and $u_2$ in $C_H$ that have the same distance from $v$ and that are connected by an edge in $H$ (otherwise, we could define a bipartition of $C_H$ by the parity of the distance from $v$ in the BFS tree). Let $v'$ be the last common vertex on the paths from $v$ to $u_1$ and from $v$ to $u_2$ in the BFS tree. The cycle in $G^\star$ that starts at $v'$, goes to $u_1$ via the BFS tree edges, then takes the edge connecting $u_1$ and $u_2$, and finally returns to $v'$ via the BFS tree edges. Let $k'= O(1/\eps^2)$ be the bound on the diameter of $C_H$ (in $G^\star$) that follows from Lemma~\ref{lemma:partition-into-conn-comp-of-small-diam}.
Since the BFS tree is a shortest path tree (from $v$), this cycle has length $k = k(\eps) = 2k' + 1 = O(1/\eps^2)$.
\end{proof}

Given any set of cycles $C$ on vertex set $V$, we write $G(C)$ to denote the graph on vertex set $V$ that is induced by $C$, i.e., $G(C)=(V,E_C)$ with $E_C$ being the union of the edges of the cycles in $C$. While Lemma \ref{lemma:first_reduction} provides us with a graph $G(C)$ that has a linear number of disjoint short odd-length cycles, it is by no means clear that this new graph $G(C)$ satisfies property (a) of Lemma \ref{lemma:ExistenceOfH}. However, we show in the next lemma that there is always a subset $C' \subseteq  C$ with cardinality $| C'| = \Omega_{\eps} (|C|)$ such that the graph $G(C')$ satisfies property (a) via showing that it satisfies property (a').

\begin{lemma}[Transformation to obtain property (a')]
\label{lemma:transformation}
Let $G=(V,E)$ be a planar graph. Let $C$ be a set of $\Omega_\eps(|V|)$ edge-disjoint cycles on $V$ in $G$, each of length at most $k$, for some
$k = k(\eps) = O_{\eps}(1)$. Then there exists a subset $C' \subseteq C$ with $|C'| = \Omega_{\eps}(|V|)$ such that the graph $G(C')$ satisfies
condition (a') of Lemma \ref{lemma:basic}. That is, for every $v \in V$, either $\deg_{G(C')}(v) = 0$ or $\deg_{G{(C')}}(v) = \Omega_\eps(\deg_G(v))$.
\end{lemma}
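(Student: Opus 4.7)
The plan is to obtain $C'$ via an iterative pruning procedure. Pick a parameter $\alpha = \alpha(\eps) = \Omega_\eps(1)$ to be fixed at the end of the argument. Starting from $C' := C$, while there exists a vertex $v \in V$ with $0 < \deg_{G(C')}(v) < \alpha \cdot \deg_G(v)$, remove from $C'$ every cycle that passes through $v$. After this removal, $\deg_{G(C')}(v) = 0$, since every $G(C')$-edge incident to $v$ belonged to some cycle of $C'$ through $v$. The process terminates because each iteration strictly decreases $|C'|$. Upon termination, every vertex $v$ either has $\deg_{G(C')}(v) = 0$ or $\deg_{G(C')}(v) \geq \alpha \cdot \deg_G(v)$, which is exactly condition (a').

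It remains to verify that $|C'| = \Omega_\eps(|V|)$. The key observation is the following charging argument. Charge the removal of each cycle to the unique vertex $v$ whose becoming ``bad'' triggered that removal. At the moment $v$ triggers its pruning step, the number of cycles in $C'$ passing through $v$ is at most $\deg_{G(C')}(v)/2$, because the cycles are edge-disjoint and each cycle through $v$ uses exactly two edges incident to $v$. Since $\deg_{G(C')}(v) < \alpha \cdot \deg_G(v)$ at that moment, the number of cycles charged to $v$ is strictly less than $\alpha \cdot \deg_G(v)/2$. Each vertex triggers at most one pruning step (its degree in $G(C')$ drops to $0$ and stays there), so summing over all vertices, the total number of removed cycles is bounded by
\[
\sum_{v \in V} \frac{\alpha \cdot \deg_G(v)}{2} \;=\; \alpha \cdot |E(G)|.
\]
Invoking Fact~\ref{fact:planar_limited_edges} (planarity of $G$), $|E(G)| \leq 3|V|$, so at most $3\alpha \cdot |V|$ cycles are removed.

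By hypothesis $|C| \geq \beta \cdot |V|$ for some $\beta = \beta(\eps) = \Omega_\eps(1)$. Choosing $\alpha := \beta/6$, we ensure at most $|C|/2$ cycles are removed, so $|C'| \geq |C|/2 = \Omega_\eps(|V|)$, and simultaneously $\alpha = \Omega_\eps(1)$, which gives $\deg_{G(C')}(v) \geq \alpha \cdot \deg_G(v) = \Omega_\eps(\deg_G(v))$ for every $v$ with $\deg_{G(C')}(v) > 0$. This yields condition (a').

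The main thing to get right is the charging: one must be careful that the bound $\deg_{G(C')}(v)/2$ on the number of cycles through $v$ is evaluated at the moment $v$ is pruned (not using the initial degree, which could be much larger), and that each removed cycle is charged to exactly one vertex. Edge-disjointness of the cycles in $C$ is essential here, as is the linear-in-$|V|$ edge bound for planar graphs; the argument extends verbatim to any minor-closed family using the corresponding linear edge bound.
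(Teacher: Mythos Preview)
Your proof is correct and follows essentially the same iterative pruning and charging argument as the paper: both repeatedly delete all cycles through a vertex whose current degree in $G(C')$ falls below a small constant fraction of $\deg_G(v)$, charge the deleted cycles to that vertex, and use the linear edge bound for planar graphs to show that only a constant fraction of $C$ is removed. The only cosmetic differences are that you make explicit the factor $1/2$ from edge-disjointness (each cycle through $v$ contributes two incident edges) and you leave $\alpha$ as a parameter fixed at the end, whereas the paper sets the threshold to $\frac{|C|}{12|V|}$ from the start.
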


\begin{proof}
We construct the subset $C'$ by deleting some cycles from $C$. The process of deleting the cycles is based on the comparison of the original degree of the vertices with the current degree in $G(C')$. To implement this scheme, we write $\deg_G(v)$ to denote the degree of $v$ in the original graph $G$ and we use the term \emph{current degree} of a vertex $v$ to denote its current degree in the graph $G(C')$ induced by the \emph{current} set of cycles $C'$ (where ``current'' means at a given moment in the process). Let $\alpha = |C| / |V| = \Omega_\eps(1)$. We repeat the following procedure as long as possible: if there is a non-isolated vertex $v \in V$ with current degree in $G(C')$ at most $\frac{\alpha}{12} \deg_G(v)$, then we delete from $C'$ all cycles going through $v$. To estimate the number of cycles deleted, we charge to $v$ the number of deleted cycles in each such operation. Observe that each $v \in V$ will be processed not more than once. Indeed, once $v$ has been used, it becomes isolated, and hence it is not used again. Therefore, at most $\frac{\alpha}{12} \deg_G(v)$ cycles from $C'$ can be charged to any single vertex. This, together with the inequality $\sum_{v \in V} \deg_G(v) \le 6 |V|$ by planarity of $G(C')$, implies that the total number of cycles removed from $C$  to obtain $C'$ is upper bounded by $\sum_{v \in V} \frac{\alpha}{12} \deg_G(v) \le \frac{\alpha}{2} |V|$. Since $|C| = \alpha |V|$, we conclude that $|C'| \ge |C| - \frac{\alpha}{2} |V| = \frac{\alpha}{2} |V| = \Omega_{\eps}(|V|)$.
\end{proof}

Thus, by Lemma \ref{lemma:transformation}, we can construct a subgraph $H_1:=G(C')$ of $G$ that is composed of a collection of $\Omega_{\eps}(|V|)$ odd-length edge-disjoint cycles that satisfy property (a) of Lemma~\ref{lemma:ExistenceOfH}.


\subsection{Constructing $H_{i+1}$ from $H_i$}

We begin with presenting some challenges of our construction and describe why two most natural lines of extending the analysis from Section \ref{subsec:constructing-H1} fail. After that, to facilitate our analysis, we first describe our framework in Section \ref{subsec:graph-repr} and then, in Section \ref{subsec:main-reduction}, we present details of the construction of $H_{i+1}$.


\subsubsection{On the challenges of the analysis}
\label{subsec:failures}

It is tempting to try to make a shortcut and avoid the need of constructing $H_2, \dots, H_k$ and prove directly that a random walk finds an odd-length cycle in $H_1$ by showing that a fixed cycle is found with probability $\Omega_{\eps}(1/|V|)$. Such a statement would be trivially true for graphs with a constant maximum degree, but it is false for arbitrary planar graphs, as  illustrates the example in Figure \ref{figure:key}.
\begin{figure}[t]
\centerline{\includegraphics[width=0.4\textwidth,angle=90]{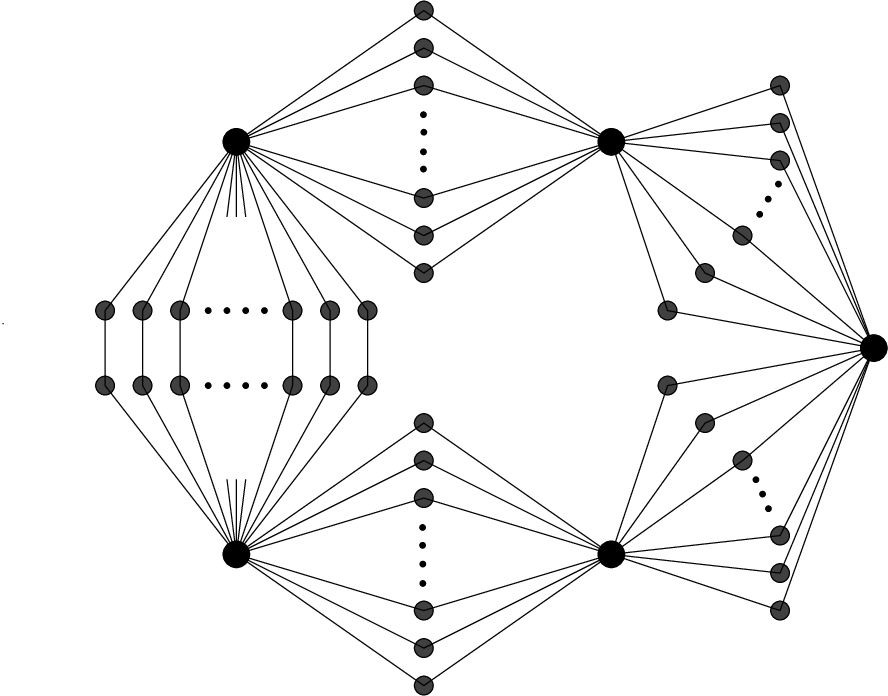}}
\caption{An illustrative example.}
\label{figure:key}
\end{figure}
The graph in this example is composed of many parallel cycles that intersect in a few vertices of high degree. It is easy to see that a random walk finds an odd-length cycle with constant probability. However, any fixed cycle is only found with sub-constant probability.\footnote{One can show that the probability of finding any fixed cycle is $n^{-\Omega_\eps(1)}$.} This implies that in our arguments it is important to \emph{exploit parallel-like structure} in the graphs.

Similarly, it could be tempting to hope that algorithm \RW{} finds with constant probability an odd-length cycle in \emph{an arbitrary graph} $G$ (not necessarily planar) that is a union of a linear number of edge-disjoint odd-length cycles of constant length.
This, however, turns out not to be the case. Consider
an expander with girth $\omega(1)$, in which every vertex has degree $\omega(1)$ (see, e.g., \cite{LPS88}). Then replace every edge by a cycle of length 3. It is easy to see that a random walk of constant length will find an odd-length cycle only if it finds one of the new cycles of length~3. However, since the probability of this event is $o(1)$, the algorithm will not find a cycle with constant probability. This implies that in our arguments it is crucial to \emph{exploit planarity}.

Note also that a constant upper bound on the graph diameter is not sufficient. To see this, consider a clique on $\sqrt{n}$ vertices and, similarly as above, replace every edge by a cycle of length $3$. The resulting graph has $\Theta(n)$ vertices. A necessary condition for a random walk to discover a cycle is to discover both edges that belong to the same cycle from the starting vertex or it has to return to a previously visited vertex of the clique from another clique vertex. But this is unlikely with a constant number of random steps since the vertices of the clique have non-constant degree.


\subsubsection{Graph representation}
\label{subsec:graph-repr}

All our graphs $H_i$ are formed by sets $C_i$ of short odd-length edge-disjoint cycles from $G=(V,E)$. We also have that $C_1 \supseteq C_2 \supseteq \dots \supseteq C_k$ and $|C_k| = \Omega_{\eps}(|V|)$, and thus $H_1 \supseteq H_2 \supseteq \dots \supseteq H_k$. For a set of cycles $C$ on vertex set $V$, we write $G(C)$ to denote the graph with vertex set $V$ and edge set being the set of edges from the cycles in $C$. Thus, in our reduction $H_i = G(C_i)$. Recall that in addition to the graphs $H_i$ we use the multigraphs $\mathcal H_i$. In order to define $\mathcal H_i$, we first require a partition $P_i:V \rightarrow V$ of the vertices that describes how the edges are contracted. The idea is that the vertex set $P^{-1}(u)$ is contracted into vertex $u$. In order to be meaningful, this partition has to satisfy the properties given in the following definition.

\begin{definition}
\label{def:good-partition}
Let $C$ be an arbitrary set of disjoint cycles in $G$. A partition $P:V \rightarrow V$ is called \emph{good} for $C$ if it satisfies the following four properties.
\begin{itemize}
\item If $u \in V$ is in the image of $P$, then $P(u)=u$.
\item If $u\in V$ is not contained in any cycle from $C$, then $P(u)=u$.
\item For each cycle $\mathfrak{c} \in C$ and each partition class $P^{-1}(u)$, where $u$ belongs to the image of $P$, if $P^{-1}(u)$ contains a vertex of $\mathfrak{c}$, then $\mathfrak{c}$ also contains $u$.
\item For each cycle $\mathfrak{c} \in C$ and each partition class $P^{-1}(u)$, where $u$ belongs to the image of $P$, one of the following is true:
    \begin{inparaenum}[\it (i)]
    \item $P^{-1}(u)$ contains all vertices of $\mathfrak{c}$, or
    \item $P^{-1}(u)$ contains no vertex of $\mathfrak{c}$, or
    \item $P^{-1}(u)$ induces a path in $\mathfrak{c}$.
    \end{inparaenum}
\end{itemize}
A vertex $u$ in the image of $P$ is called \emph{the head of the partition class $P^{-1}(u)$}.	
\end{definition}

\begin{figure}[t]
\centerline{\includegraphics[width=.9\textwidth]{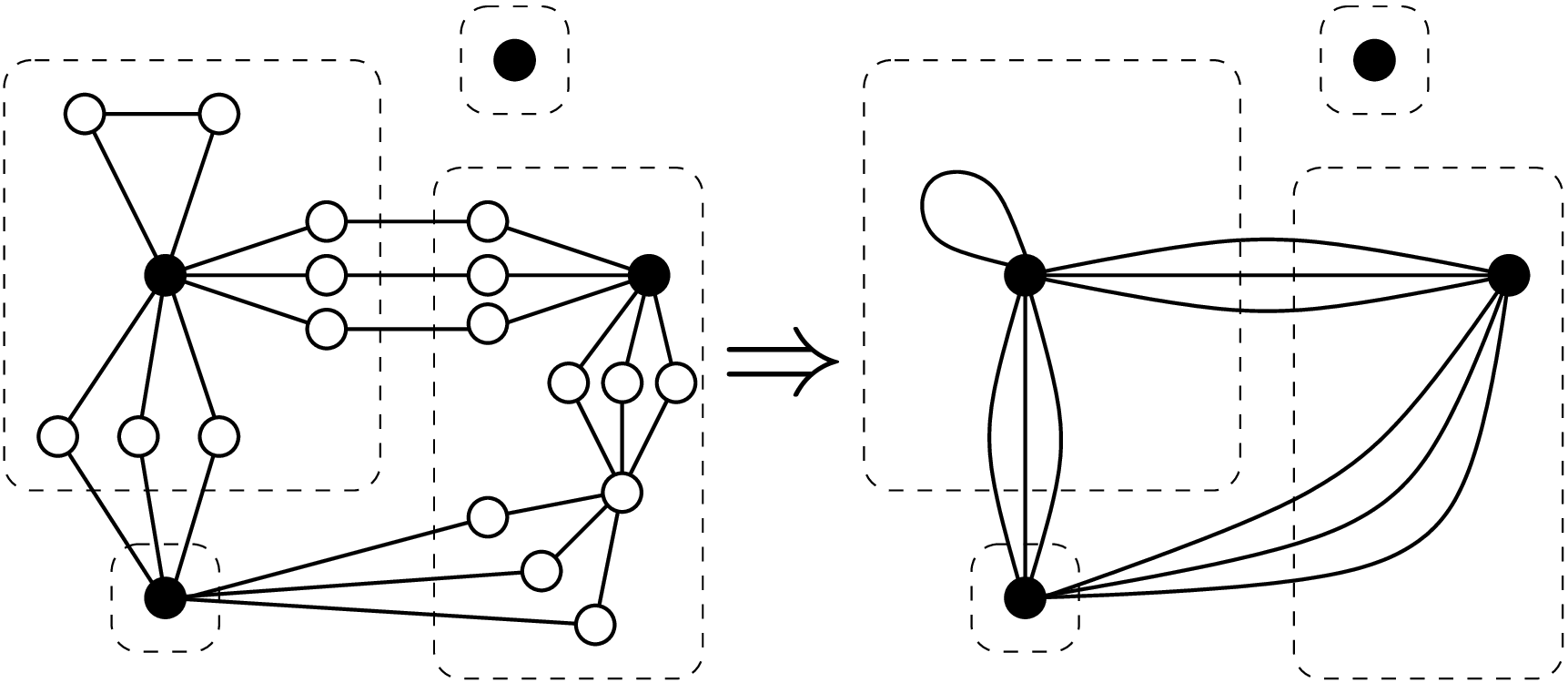}}
\caption{A sample good partition of vertices and the resulting contracted graph. The solid circles denote heads of partition classes.}
\label{fig:partition}
\end{figure}

An example of a good partition is presented in Figure \ref{fig:partition} (left side). The first property ensures that all partition classes have a proper head, i.e., a vertex into which all other vertices are contracted. The second property ensures that each isolated vertex in $G(C)$ has its own partition class. The third and fourth properties ensure that we can apply the contractions to each cycle from $C$ individually. This way we may also obtain from $C$ a multiset of contracted cycles $\mathcal C$.

Notice that the above definition implies that any partition class $P^{-1}(u)$, $u$ in the image of $P$, forms a connected subgraph in $G(C)$. Indeed, assume that $v$ and $w$ are in $P^{-1}(u)$, $v \ne w$. By the second property, any of $v$ and $w$ which is not equal to $u$ is contained in a cycle. By the third property, either $v=u$ or $v$ is contained in a cycle, and hence $v$ and $u$ are in the same cycle; similarly, either $w=u$ or $w$ and $u$ are in the same cycle. Therefore, there is a path connecting $v$ and $w$.

In our analysis, we modify a given partition $P$ in two ways: by \emph{edge contractions} and by \emph{cycle removals}. As a result of contracting an edge $(u,v)$, the partition classes of $u$ and $v$ are merged. Deleting one (or more) cycles from $C$ may create an isolated vertex (with degree $0$) in $G(C)$. In this case, the second property in Definition \ref{def:good-partition} may be violated. Therefore, whenever we create an isolated vertex $u$ in $G(C)$ by deleting a cycle from $C$, we define $P(u) = u$ to satisfy the second condition. It is easy to verify that the other conditions are still satisfied after this modification has been applied to all newly created isolated vertices.

We abuse the notation and write $P(\mathfrak{c})$ to denote the cycle obtained from $\mathfrak{c}$ by contracting its vertices according to $P$. We remark that the contracted cycles may be degenerated to self-loops or cycles of the form $(u,v,u)$. For example, if all vertices of $\mathfrak{c}$ are contained in the same partition class of $P$, then the resulting cycle is a self-loop incident to the head of the partition class. For any set of cycles $C$, we also write $P(C)$ to denote the union of $P(\mathfrak{c})$ for all $\mathfrak{c} \in C$. Finally, we define $\mathcal G_{P}(C)$ to be the multigraph whose vertex set is the image of $P$ and whose edge multiset is the multiset of edges of cycles in $P(C)$. The right side of Figure \ref{fig:partition} contains the resulting multigraph for the input set of cycles and partition on the right.
For a multigraph $\mathcal G$, we write $\deg_{\mathcal G}(u)$ to denote the number of edges in $\mathcal G$ that are incident to $u$ (i.e., self-loops contribute $1$ to the degree).

In our main reduction, we define $\mathcal H_i = \mathcal G_{P_i}(C_i)$. The multigraph $\mathcal H_i$ has the following interpretation. An edge $(u,v)$ in $\mathcal H_i$ represents a path of $H_i$ of length $O_\eps(1)$. Therefore, with probability $\Theta_{\eps}(1/\deg_{\mathcal H_i}(u))$, a random walk on $G(C_i)$ of (suitable) constant length starting in $u$ reaches $v$. If there are $q$ edges $(u,v)$ in $\mathcal H_i$ then their union represents a subgraph of $H_i$ that for random walks behaves like a set of $q$ parallel non-intersecting paths. In particular, the probability of moving from $u$ to $v$ is $\Theta_{\eps}(q/\deg_{\mathcal H_i}(u))$, where $q$ is the number of edges $(u,v)$ in $\mathcal H_i$.

Furthermore, a random walk starting from a vertex in $P^{-1}(u)$ reaches  vertex $u$ with constant probability after a constant number of steps.

With this definition, we can extend our notion of random walks to multigraphs in the following natural way (where we define the notion of odd-parity cycles below).

\medskip
\begin{algo}
\RW\,${(\mathcal{G}= (U,\mathcal{E}),t)}$:
\begin{itemize}
\item Pick a random vertex $u \in U$, such that any $u \in U$ is chosen with probability $\frac{|P^{-1}(u)|}{|V|}$.
\item  Perform a random walk of length $t$ from $v$, where the probability to move from a vertex $u \in U$ to a vertex $w\in U$ is $|\{(u,w)\in \mathcal{E}\}| / \deg_{\mathcal{G}}(u)$.
\item  If the random walk found an odd-parity cycle, then \textbf{reject}.
\item  If not, then \textbf{accept}.
\end{itemize}
\end{algo}

\paragraph{Parities of edges and lengths of cycles.}

Since the contraction of edges influences the length of cycles and since we are searching for cycles of odd-length in $G(C)$, we need a way to keep track of the parity of the lengths of the cycles and paths explored and contracted. It would be easy, if we were just dealing with fixed cycles, but since we would like to allow combinations of cycles, we need to encode the contractions in the graph in a way that allows us to use combinations. This can be done as follows. For every cycle $\mathfrak{c} \in C_i$, we have a corresponding contracted cycle in $\mathfrak{c}' \in P_i(C_i)$ on vertex set $U$. Each edge in $\mathfrak{c}'$ corresponds to a path in $\mathfrak{c}$ and we define the parity of an edge $(u,v)$ in $\mathfrak{c}'$ as $0$, if the length of the path from $u$ to $v$ in $\mathfrak{c}$ is \emph{even} and $1$, if the length is \emph{odd}. In the case that $u$ and $v$ are the only two vertices in $\mathfrak{c}'$, there are two edges connecting $u$ and $v$ and one of them has parity $0$ while the other one has parity $1$. This way, a cycle in the multigraph $\mathcal{G}$ has odd parity if it contains an odd number of edges with parity $1$.


\subsubsection{Main reduction from $H_i$ and $\mathcal H_i$ to $H_{i+1}$ and $\mathcal H_{i+1}$}
\label{subsec:main-reduction}

The next lemma states our main technical contribution: the main reduction with its properties. We apply it several times to reduce the lengths of our cycles to $1$. For the clarity of presentation, we postpone the proof of this lemma to Section \ref{sec:proof-of-lemma:second_reduction}.

\begin{lemma}
\label{lemma:second_reduction}
Let $t = O_{\eps}(1)$, let $G = (V,E)$ be a planar graph, and let $\ell \in \NAT$, $\ell \ge 2$. Let $C$ be a set of $\Omega_{\eps}(|V|)$ odd-length cycles in $G$. Let $P$ be a partition that is good for $C$ such that all cycles in $P(C)$ have length at most $\ell$. Then we can construct a set of cycles $C^* \subseteq C$ with $|C^*| = \Omega_{\eps}(|C|)$, and a partition $P^*$ that is good for $C^*$, such that the following properties are satisfied:
\begin{itemize}
\item every cycle in $P^*(C^*)$ has length at most $\ell-1$, and
\item if the probability that \RW$(\mathcal{G}_{P^*}(C^*),t)$ finds an odd-parity cycle is $\Omega_{\eps}(1)$, then also the probability that \RW$(\mathcal{G}_{P}(C),3t)$ finds an odd-parity cycle is $\Omega_{\eps}(1)$.
\end{itemize}
\end{lemma}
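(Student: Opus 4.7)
The plan is to contract at least one carefully chosen edge per cycle in $P(C)$ to obtain shorter cycles in $P^*(C^*)$, following the three-stage template suggested by the macros \AL, \CLF, and \CS. Roughly, I first label each cycle's vertices with levels $0,1,\dots,|\mathfrak{c}'|-1$ so that the ``same'' edge can be identified across different cycles; next, I select a subcollection $C^* \subseteq C$ on which one common contraction scheme is compatible; finally, I perform the contractions and verify that the resulting partition $P^*$ is good and that random-walk probabilities are preserved.

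Concretely, in the \AL{} stage I would orient each cycle $\mathfrak{c}' \in P(C)$ and label its vertices by their position along the orientation, together with the parity of each traversed edge. Since $|\mathfrak{c}'| \leq \ell = O_{\eps}(1)$, each vertex $u$ of $\mathcal{G}_P(C)$ picks up only $O_{\eps}(1)$ possible ``labels'' per incident cycle. In the \CLF{} stage, a pigeonhole over these $O_{\eps}(1)$ label types yields a subset $C^\dagger \subseteq C$ of size $\Omega_{\eps}(|C|)$ on which one fixed contraction index is consistent; say, every cycle in $C^\dagger$ contracts the edge between its levels $0$ and $1$. I then further prune to $C^*$ by discarding cycles that would either (i) violate the induced-path property of Definition~\ref{def:good-partition} after contraction, or (ii) concentrate too many contractions at a single vertex, so that its degree in $\mathcal{G}_{P^*}(C^*)$ would drop by more than a constant factor relative to $\mathcal{G}_P(C)$. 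Planarity gives $|E(\mathcal{G}_P(C))| = O(|V|)$, and a charging argument in the spirit of Lemma~\ref{lemma:transformation} shows that at most half of $C^\dagger$ is lost to these prunings, so $|C^*| = \Omega_{\eps}(|C|)$. In the \CS{} stage I merge, for each surviving contraction edge $(u,v)$, the adjacent partition classes $P^{-1}(u)$ and $P^{-1}(v)$, pick a canonical head for the new class, and verify the four properties of Definition~\ref{def:good-partition}. Each surviving cycle loses at least one edge, so its image in $P^*(C^*)$ has length at most $\ell - 1$.

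For the random-walk comparison, I would exhibit a coupling between \RW$(\mathcal{G}_{P^*}(C^*),t)$ and \RW$(\mathcal{G}_P(C),3t)$. The two initial-vertex distributions agree up to the partition-class sizes $|P^{-1}(\cdot)|$ and $|{P^*}^{-1}(\cdot)|$, which assign the same total mass to each vertex of $V$. A single transition from $u$ to $w$ in the coarser multigraph is simulated in the finer one by at most three consecutive steps: one step to move inside the merged cluster of $u$ to the endpoint of the contracted edge, one step across the contracted edge itself, and one step to land on a representative of $w$. The degree-ratio guarantee from the \CLF{} filtering ensures that each such three-step transition matches the coarse one-step transition up to an $O_{\eps}(1)$ factor, so the probability of every fixed length-$t$ walk transfers up to a $\left(O_\eps(1)\right)^t = O_\eps(1)$ factor. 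Edge parities are tracked during contraction precisely so that an odd-parity cycle detected in $\mathcal{G}_{P^*}(C^*)$ lifts to an odd-parity closed walk in $\mathcal{G}_P(C)$.

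The hard part is the \CLF{} stage: it must simultaneously enforce (a) a uniform contraction scheme across a constant fraction of cycles, (b) the good-partition induced-path property, and (c) a bounded degree-ratio between $\mathcal{G}_P(C)$ and $\mathcal{G}_{P^*}(C^*)$ at every vertex, all while preserving $|C^*| = \Omega_{\eps}(|C|)$. Each constraint alone is easy; their conjunction is delicate, and is precisely where planarity's linear edge bound $|E| = O(|V|)$ must be invoked as the charging budget — indeed, the expander-with-triangles counterexample in Section~\ref{subsec:failures} shows that no analogous reduction can work in arbitrary graphs, so any proof strategy that does not exploit this bound must fail.
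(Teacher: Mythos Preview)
Your proposal has a genuine gap at the heart of the \CLF{} stage. The difficulty is not pigeonholing over cycle-local position labels but ensuring that a contraction performed for one cycle does not destroy the good-partition property for \emph{other} cycles sharing the contracted vertex. Concretely, if you merge the classes of $u$ and $v$ with new head $u$, then by the third clause of Definition~\ref{def:good-partition} every cycle through $v$ must also pass through $u$; otherwise the merged class contains a vertex of a cycle whose head it does not contain. Your ``discard cycles that would violate the induced-path property'' step offers no bound on how many cycles are lost this way, and in general almost all of them can be: take a vertex $v$ in $\mathcal G_P(C)$ with many distinct neighbors, each lying on a different cycle through $v$; contracting $v$ into any one neighbor forces you to discard all the other cycles. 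The linear edge bound $|E(\mathcal G_P(C))| = O_\eps(|V|)$ that you invoke controls the total edge count of the multigraph, not the number of distinct neighbors of a particular vertex, and does not rule out this scenario. Your positional labeling also carries no global meaning---level~$0$ of one cycle and level~$0$ of another are unrelated vertices---so the ``consistent contraction index'' obtained by pigeonhole does nothing to make contractions at shared vertices compatible.

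The paper's route is structurally different and hinges on a use of planarity that your proposal misses. The \AL{} procedure is not a positional labeling of cycle vertices; it is a two-phase peeling argument (Lemma~\ref{lemma:small-vertices}) that repeatedly applies Euler's formula to the planar minor $\mathcal G_P(C)$ to locate a vertex with at most $6$ \emph{distinct} neighbors, strips all cycles through it, and iterates. An amortization over these ``levels'' shows that a constant fraction of cycles can be retained so that each contains such a low-distinct-degree vertex. A further random-filtering step (Lemma~\ref{lemma:contractible-vertices}) turns these into \emph{well-contractible} vertices: vertices $v$ with at most two distinct neighbors $x,y$ such that every remaining cycle through $v$ uses both edges $(v,x)$ and $(v,y)$ with fixed parities. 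It is exactly this well-contractible structure that makes the contraction of $v$ into a neighbor simultaneously safe for every incident cycle, preserves parities coherently, and guarantees the $\tfrac12$ lower bound on the two-step transition probability used in the random-walk comparison. Without isolating such vertices (or an equivalent mechanism), the good-partition property cannot be maintained while shortening a constant fraction of the cycles.
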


This lemma is used to construct $H_{i+1}$ and $\mathcal H_{i+1}$. We take as the input $C = C_i$ and $P = P_i$, and apply Lemma \ref{lemma:second_reduction} to obtain $C_{i+1} = C^*$ and $P_{i+1} = P^*$, giving $H_{i+1} = G(C^*)$ and $\mathcal{H}_{i+1} = \mathcal{G}_{P^*}(C^*)$.

\junk{
\begin{lemma}
Let $t=O_{\eps}(1)$. Let $G = (V,E)$ be a planar graph and let $\ell \in \NAT$, $\ell \ge 2$. Let $C_i$ be a set of $\Omega_{\eps}(|V|)$ edge disjoint odd-length cycles in $G$. Let $P_i$ be a partition that is good for $C_i$ such that all cycles in $P_i(C_i)$ have length at most $\ell$. Then we can construct a set of cycles $C_{i+1} \subseteq C_i$ with $|C_{i+1}| = \Omega_{\eps}(|C_i|)$, and a partition $P_{i+1}$ that is good for $C_{i+1}$, such that the following properties are satisfied:
\begin{itemize}
\item every cycle in $P_{i+1}(C_{i+1})$ has length at most $\ell-1$, and
\item if the probability that \RW$(\mathcal H_{i+1},t)$ finds an odd-parity cycle is $\Omega_{\eps}(1)$, then also the probability that \RW$(\mathcal H_i,2t)$ finds an odd-parity cycle is $\Omega_{\eps}(1)$.
\end{itemize}
\end{lemma}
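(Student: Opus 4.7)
My plan is to choose a small set $F$ of edges in the multigraph $\mathcal{H} = \mathcal{G}_P(C)$ to contract (i.e., to merge the corresponding partition classes), so that every cycle of length exactly $\ell$ in $P(C)$ contains at least one edge of $F$, and then to remove those cycles of $C$ whose image would conflict with the merging, so that the new partition $P^*$ remains good for the remaining set $C^* \subseteq C$. The two core challenges will be (i) choosing $F$ so that only an $O_\eps(1)$ fraction of cycles need to be removed, and (ii) arguing that one step of a random walk in $\mathcal{G}_{P^*}(C^*)$ can be simulated by at most $3$ steps in $\mathcal{G}_P(C)$ with only an $\Omega_\eps(1)$ loss in transition probability.

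For (i), I would first restrict attention to the cycles of length exactly $\ell$ in $P(C)$; call this set $C_\ell$. The underlying simple graph of $\mathcal{H}$ is a minor of the planar graph $G$, hence planar with at most $3|V(\mathcal{H})|-6$ edges, which bounds how many cycles can share a common underlying edge. For each $\mathfrak{c}' \in C_\ell$, I would pick a ``low-weight'' edge of $\mathfrak{c}'$ as its representative, where low-weight means that its two incident vertices have comparable and not-too-large degree in $\mathcal{H}$; such an edge exists by an averaging argument over the $\ell = O_\eps(1)$ edges of each cycle, in the spirit of Lemma \ref{lemma:transformation}. The set $F$ is defined as the union of these representatives over all cycles in $C_\ell$.

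A conflict for a cycle $\mathfrak{c}_2 \in C$ would arise when the contracted image of $\mathfrak{c}_2$ in $\mathcal{H}$ visits both endpoints $u,v$ of some edge $e \in F$ without traversing $e$ itself; merging $P^{-1}(u)$ and $P^{-1}(v)$ would split $\mathfrak{c}_2$'s intersection with the new class into two disjoint paths, violating the path condition of Definition \ref{def:good-partition}. I would delete each such offending cycle from $C$ to obtain $C^*$. Since every cycle has length at most $\ell = O_\eps(1)$ and $|C| = O_\eps(|V|)$ (edge-disjointness plus planarity of $G$), a charging argument that uses the planar edge bound on the underlying simple graph of $\mathcal{H}$ shows that only a sufficiently small constant fraction of $C$ is removed, giving $|C^*| = \Omega_\eps(|C|)$. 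Defining $P^*$ as $P$ with the classes corresponding to every edge of $F$ merged then yields a good partition for $C^*$, and every cycle in $P^*(C^*)$ has its length in $P(C)$ decreased by at least one, hence is at most $\ell - 1$.

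The hard part will be (ii), the random-walk preservation. I plan to set up an explicit coupling: given a walk $u_0, u_1, \dots, u_t$ in $\mathcal{G}_{P^*}(C^*)$, construct from it a walk in $\mathcal{G}_P(C)$ of length at most $3t$ that lifts each merged vertex to one of its pre-merge heads. Moving from a merged vertex $w$ (obtained from $u$ and $v$) to some neighbor $x$ in $\mathcal{G}_{P^*}(C^*)$ would be simulated by at most three moves in $\mathcal{G}_P(C)$: an optional hop between $u$ and $v$ along one of their connecting edges (which becomes a self-loop at $w$ after contraction), a step out to the appropriate pre-merge representative of $x$, and possibly one more step to absorb edges of $\mathcal{G}_P(C)$ that get bundled into the multiplicity of the merged vertex. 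The key estimate is that the ratio of per-step transition probabilities stays bounded below by $\Omega_\eps(1)$; this is exactly where the ``low-weight'' choice of representatives in (i) matters, ensuring that the degrees of $u$ and $v$ are within a constant factor of each other and of the merged vertex, so that no single pre-merge head swallows the random walk's probability mass. Iterating over $t = O_\eps(1)$ steps yields the required $\Omega_\eps(1)$ multiplicative loss, and because the parity of each collapsed path is recorded on the edge in $\mathcal{G}_P(C)$ by definition, the odd-parity of any discovered cycle is preserved under the lifting.
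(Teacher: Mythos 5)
Your proposal takes a genuinely different route from the paper: you contract a set $F$ of \emph{edges}, one chosen per long cycle, and delete cycles that conflict with the merging; the paper instead thins the cycle set so that every surviving cycle contains a \emph{well-contractible} vertex from an independent set $Q$, and contracts each $q\in Q$ into a single chosen neighbor. The paper's route is engineered precisely to avoid the two difficulties your plan glosses over.

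The first and most serious gap is that $F$ need not be a matching. If $F$ contains adjacent edges $(u,v)$ and $(v,w)$, then merging along all of $F$ fuses $u,v,w$ (and transitively more) into one class. A cycle $\mathfrak{c}_2$ whose contracted image visits $u$ and $w$ but not $v$ fails the path condition of Definition~\ref{def:good-partition} for the fused class, yet your conflict test only inspects pairs that are endpoints of a \emph{single} edge of $F$, so it will not delete $\mathfrak{c}_2$. Detecting all such cascaded violations means reasoning about connected components of $(V(\mathcal{H}),F)$, which can be arbitrarily large, and there is no clean charging bound on the number of cycles lost. The paper sidesteps this entirely: after Lemma~\ref{lemma:contractible-vertices} it takes $Q$ to be an independent set in $\mathcal{G}_{P'}(C')$, so contractions never cascade, each new class is exactly two old classes merged, and goodness of $P^*$ is immediate.

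The second gap is parity bookkeeping. When you contract $(u,v)$, every edge $(v,x)$ must be re-assigned parity $p(v,x)+p(u,v)$, but there may be several parallel edges between $u$ and $v$ with different parities (a length-$2$ cycle in $P(C)$ always contributes one of each), so ``the'' contraction parity is ambiguous, and re-assignment based on the specific edge you picked can make the parity of some surviving cycle ill-defined. This is exactly why the paper's definition of well-contractibility insists that all copies of $(x,v)$ have the same parity and all copies of $(v,y)$ have the same parity, and why Lemma~\ref{lemma:contractible-vertices} randomly fixes $(p_x,p_y)$ and discards the mismatching cycles. Your plan needs an analogous filtering step. Two smaller points: the existence of a ``low-weight'' edge with comparable-degree endpoints on every cycle does not follow from a straightforward averaging argument (a cycle can strictly alternate high/low degree, so no edge has comparable endpoints), whereas the paper only needs \emph{one} low-degree vertex per cycle and obtains it via Euler's formula and the two-phase level argument of Lemma~\ref{lemma:small-vertices}; and your final set must also satisfy the degree-preservation condition (a'), which the paper restores by re-applying Lemma~\ref{lemma:transformation} inside Lemma~\ref{lemma:MainStep} but which your outline omits. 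The high-level random-walk simulation by a walk of length $O(t)$ is in the right spirit, but its correctness depends on the contraction being a matching-type operation and the parities being consistent, i.e., on fixing the two gaps above.
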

}


\subsubsection{Useful property of $\mathcal H_k$}
\label{subsec:final-self-loops}

The first property of the construction in Lemma \ref{lemma:second_reduction} implies that the length of the cycles in $P_i(C_i)$ decreases with increasing $i$. We therefore apply Lemma \ref{lemma:second_reduction} $k-1$ times, where $k$ is the original upper bound for the cycle length in Lemma \ref{lemma:transformation}. As a result, we obtain a sequence $C_1 \supseteq \dots \supseteq C_k$ consisting of sets of cycles such that $P_k(C_k)$ consists solely of self-loops. We use the following property of graphs induced by such self-loops.

\begin{lemma}
\label{lemma:Selfloops}
Let $G=(V,E)$ be a planar graph. Let $C_k$ be a set of edge disjoint odd-length cycles in $G$ and $P_k$ be a partition that is good for $C_k$. If all cycles in $P_k(C_k)$ are self-loops and $|C_k| = \Omega_{\eps}(|V|)$, then the probability that a $1$-step random walk finds an odd-parity cycle in $\mathcal H_k$ is $\Omega_{\eps}(1)$.
\end{lemma}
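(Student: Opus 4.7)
The plan is to observe that, in this degenerate configuration, a $1$-step random walk finds an odd-parity cycle almost by inspection, and that the only quantitative ingredient needed is a standard planarity count bounding the number of vertices touched by the cycles of $C_k$.

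First I would justify that every self-loop edge of $\mathcal H_k$ has parity $1$. By the fourth property of a good partition (Definition \ref{def:good-partition}, case (i)), each cycle $\mathfrak{c} \in C_k$ that contracts to a self-loop at some head $u$ has its entire vertex set inside $P_k^{-1}(u)$, so the single edge of that self-loop corresponds to traversing $\mathfrak{c}$ once, a walk of length $|\mathfrak{c}|$, which is odd. Hence any traversal of a single self-loop already exhibits an odd-parity cycle in $\mathcal H_k$.

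Second, I would compute the detection probability. Since every edge of $\mathcal H_k$ is a self-loop, from any vertex $u$ with $\deg_{\mathcal H_k}(u) \geq 1$ the $1$-step walk certainly traverses one, so
\[
\Pr{\text{odd-parity cycle found}} \;=\; \sum_{u\,:\,\deg_{\mathcal H_k}(u)\,\geq\, 1} \frac{|P_k^{-1}(u)|}{|V|} \;=\; \frac{|S|}{|V|},
\]
where $S := \bigcup_{\mathfrak{c}\in C_k} V(\mathfrak{c})$ is the set of vertices lying on some cycle of $C_k$. The second equality uses that every vertex of a cycle $\mathfrak{c}$ sits in the partition class of its contraction head $u_\mathfrak{c}$, and that head has $\deg_{\mathcal H_k}(u_\mathfrak{c}) \geq 1$; conversely, if $\deg_{\mathcal H_k}(u)=0$, then by the second property of Definition \ref{def:good-partition} the class $P_k^{-1}(u)$ consists of $u$ alone and contributes nothing to $S$.

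The remaining step is to show $|S| = \Omega_\eps(|V|)$. I would do this by a standard planarity count: edge-disjointness of the cycles in $C_k$ together with the fact that each has length at least $3$ gives
\[
|E(G(C_k))| \;=\; \sum_{\mathfrak{c}\in C_k}|E(\mathfrak{c})| \;\geq\; 3|C_k|,
\]
while the subgraph of $G$ induced by $S$ is a simple planar graph on $|S|$ vertices carrying all the edges of $G(C_k)$, so Fact \ref{fact:planar_limited_edges} yields $|E(G(C_k))| \leq 3|S|-6 \leq 3|S|$. Combining the two bounds gives $|S| \geq |C_k| = \Omega_\eps(|V|)$, hence $|S|/|V| = \Omega_\eps(1)$, which finishes the proof. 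I do not anticipate a serious obstacle here; the main thing to unpack carefully is the parity convention (a self-loop is a length-$1$ cycle in the multigraph whose parity equals that of the underlying original cycle) and the convention that a self-loop contributes $1$ to the degree, after which the detection probability reduces cleanly to $|S|/|V|$ and the planarity bound takes over.
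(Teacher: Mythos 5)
Your proposal is correct and follows essentially the same route as the paper's own proof: identify that the contribution of a partition class $P_k^{-1}(u)$ to the starting distribution is nonzero exactly when $u$ is incident to a self-loop, observe every non-isolated vertex of $H_k$ lies in such a class, and then use the planarity edge-count (each cycle contributes $\geq 3$ edges, planarity caps edges at a linear function of the non-isolated vertex count) to get $\Omega_\eps(|V|)$ non-isolated vertices. Your write-up just makes the planarity count and the parity-$1$ observation a bit more explicit than the paper does.
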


\begin{proof}
Since $C_k$ is a collection of $\Omega_{\eps}(|V|)$ edge disjoint cycles in $G$, the underlying graph $H_k$ has $\Omega_{\eps}(|V|)$ edges. Therefore since $H_k$ is planar, $H_k$ must have $\Omega_{\eps}(|V|)$ non-isolated vertices. Each of these non-isolated vertices is contracted in $P_k$ into a vertex that is incident to a self-loop. Therefore, the probability to sample a vertex incident to a self-loop is $\Omega_{\eps}(1)$ and then in one step we find a self-loop. By the definition of parities, the detected self-loop  has odd parity.
\end{proof}


\subsection{Proof of Lemma \ref{lemma:ExistenceOfH} and Theorem \ref{thm:main-bipartiteness}: analyzing \RBE}

Now we are ready to prove Lemma \ref{lemma:ExistenceOfH} and with this, our main theorem, Theorem \ref{thm:main-bipartiteness}.

In order to prove Lemma \ref{lemma:ExistenceOfH}, we show the existence of $H$ as required in Lemma \ref{lemma:ExistenceOfH}, by constructing a sequence of subgraphs $G \supseteq H_1 \supseteq \dots \supseteq H_k$ for some $k=k(\epsilon)$, such that each $H_i$, $1 \le i \le k$, satisfies property (a) from Lemma \ref{lemma:ExistenceOfH} and such that $H:=H_1$ also satisfies property (b) from Lemma \ref{lemma:ExistenceOfH}.

We know by Lemma \ref{lemma:first_reduction} that $G$ contains a set $C$ of $\Omega_{\eps}(|V|)$ odd-length cycles of length at most $k = O_{\eps}(1)$. Furthermore, by Lemma \ref{lemma:transformation}, there is a set $C_1 \subseteq C$ of $\Omega_{\eps}(|V|)$ odd-length cycles of length at most $k$ such that if \RW$(G(C_1),t)$ on $G(C_1)$ finds an odd-length cycle with probability $\Omega_{\eps}(1)$, then so does \RW$(G,t)$.
We then define $H_1 = G(C_1)$ and $P_1:V \rightarrow V$ to be the identity. Clearly, $P_1$ is good for $C_1$. Then we apply $k-1$ times Lemma \ref{lemma:second_reduction} to obtain sets of cycles $C_i$ and partitions $P_i$ that satisfy the properties of Lemma \ref{lemma:second_reduction}. In particular, since $k = O_\eps(1)$, we know that $\mathcal H_k$ contains $\Omega_{\eps}(|V|)$ self-loops. Moreover, if \RW$(\mathcal H_k, T)$ finds an odd-parity cycle with probability $\Omega_{\eps}(1)$, then so does \RW$(\mathcal H_1, T \cdot 3^{k-1})$, which in turn implies that \RW$(G, T \cdot 3^{k-1})$ also finds an odd-length cycle with probability $\Omega_{\eps}(1)$. Thus, we only need to prove that \RW$(\mathcal H_k, T)$ with $T = O_{\eps}(1)$ finds an odd-parity cycle with probability $\Omega_{\eps}(1)$. This follows immediately from Lemma \ref{lemma:Selfloops}. This concludes the proof of Lemma \ref{lemma:ExistenceOfH} by setting $H = H_1 = G(C_1)$ since 
a random walk in $\mathcal H_1$ behaves identically to a random walk in $H_1$.

Once we have Lemma \ref{lemma:ExistenceOfH}, Theorem \ref{thm:main-bipartiteness} follows immediately. Indeed, we already observed that to prove Theorem \ref{thm:main-bipartiteness} it suffices to show that for any planar graph $G = (V,E)$ that is $\eps$-far from bipartite, \RW$(G,t)$ finds an odd-length cycle with probability $\Omega_{\eps}(1)$ for $t = O_{\eps}(1)$, which follows from Lemma~\ref{lemma:ExistenceOfH}.


\section{Proof of Lemma \ref{lemma:second_reduction}}
\label{sec:proof-of-lemma:second_reduction}

To complete the analysis, it remains to prove Lemma \ref{lemma:second_reduction}. We start with an overview of the proof. The main idea is to thin out the current set of cycles $C$ to ensure that we can define a set of ``contractions'' to decrease the length of each of the remaining cycle. Further care is needed to ensure that after performing the contractions, we still maintain a good partition for the set of remaining cycles. This means that we are not allowed to contract edges that ``shortcut'' other remaining cycles. In order to avoid this we ensure that one of the contracted vertices, say $v$, either has only one distinct neighbor in $\mathcal H_i$ (in which case all cycles that involve this vertex must be of the form $(u,v,u)$) or it has exactly two distinct neighbors, say $x$ and $y$, and all cycles that involve $v$ contain the edges $(x,v)$ and $(v,y)$. Furthermore, in the latter case, we ensure that all edges $(x,v)$ have the same parity and all edges $(v,y)$ have the same parity, as otherwise the parity after the contraction would not be well-defined. This is captured by the following definition.

\begin{definition}
Let $C$ be a set of cycles on vertex set $V$ and $P$ be a partition that is good for $C$. We say a vertex $v \in P(V)$ is \emph{well-contractible} in $\mathcal G_P(C)$ if it satisfies one of the following conditions:
\begin{itemize}
\item $v$ has only one distinct neighbor in $\mathcal G_P(C)$, or
\item $v$ has two distinct neighbors $x$ and $y$ in $\mathcal G_P(C)$, and
    \begin{itemize}[$\diamond$]
    \item all cycles in $C$ that contain $v$ contain also both $x$ and $y$, and
    \item all edges $(x,v)$ have the same parity and all edges $(v,y)$ have the same parity.
    \end{itemize}
\end{itemize}
\end{definition}

Using this definition, our goal is to thin out the cycles such that every cycle has a well-contractible vertex. We also need to ensure that our contractions using well-contractible vertices do not interfere with each other. Therefore we additionally require that the set of well-contractible vertices forms an independent set.

After finding a set of cycles $C'$ (and good partition $P'$) with the above properties, we still have to do a  ``clean-up'' phase
to ensure that if a random walk finds with constant probability a cycle in $\mathcal G_P'(C')$, then this is also true in $\mathcal G_P(C)$. In order to achieve this, we use Lemma \ref{lemma:transformation} to ensure that the degree of every non-isolated vertex in $\mathcal G_P'(C')$ is at least a constant fraction of its degree in $\mathcal G_P(C)$.


\paragraph{Outline.}

The roadmap is now as follows. We first find a subset of cycles such that every cycle has a well-contractible vertex. In order to do so, we first construct a large subset of cycles such that every cycle has a vertex with at most {6} distinct neighbors (Lemma \ref{lemma:small-vertices}). Then we argue that such a set of cycles satisfies at least one of the following two conditions: (1) It already has many self-loops, in which case we can just take this subset of self-loops and we are done with our reduction, or (2) we can remove all self-loops and process the remaining (linear number of) cycles to ensure that every cycle contains a well-contractible vertex (Lemma~\ref{lemma:contractible-vertices}).

\begin{lemma}
\label{lemma:small-vertices}
Let $C$ be a set of edge-disjoint cycles on vertex set $V$ that are of length at most $k$ and such that $G(C)$ is a planar graph. Let $P$ be a partition that is good for $C$. Then there is a set $C' \subseteq C$ of size at least $\frac{1}{4k+2} |C|$ such that every cycle in $P(C')$ has a vertex with at most 6 distinct neighbors in $\mathcal G_{P}(C')$.
\footnote{We slightly abuse notation here as $P$ may be no longer a good partition due to vertex removal. In this case, we can still define $\mathcal G_{P}(C')$ the same way as above.}
\end{lemma}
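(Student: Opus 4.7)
The plan is to take $C'$ to be the set of cycles in $C$ whose contracted image $P(\mathfrak{c})$ passes through at least one vertex with at most $6$ distinct neighbors in $\mathcal{G}_P(C)$---call such vertices \emph{light}, and the remainder \emph{heavy}. First I observe that the underlying simple graph $\mathcal{G}^s$ of $\mathcal{G}_P(C)$ is planar, because $G(C)$ is planar and both vertex contractions and the suppression of self-loops and parallel edges used to form $\mathcal{G}^s$ preserve planarity (Fact~\ref{fact:planar_minor_planar}). Euler's formula (Fact~\ref{fact:planar_limited_edges}) then yields $\sum_v d_v \leq 6|V(\mathcal{G}^s)|$, where $d_v$ denotes the distinct-neighbor count of $v$ in $\mathcal{G}_P(C)$; hence by a simple averaging argument, at least a $\tfrac{1}{7}$-fraction of the non-isolated vertices of $\mathcal{G}^s$ are light, and I denote this set by $L$.

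The covering property required by the lemma then follows from a short monotonicity argument. Since $C' \subseteq C$, the multigraph $\mathcal{G}_P(C')$ is obtained from $\mathcal{G}_P(C)$ by deleting the edges of the cycles in $C\setminus C'$, so no vertex gains distinct neighbors. In particular every $v \in L$ remains light in $\mathcal{G}_P(C')$, and by construction every $\mathfrak{c} \in C'$ contains such a vertex, so every cycle in $P(C')$ has a vertex with at most $6$ distinct neighbors in $\mathcal{G}_P(C')$.

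The heart of the proof is the lower bound $|C'| \geq |C|/(4k+2)$, which I plan to establish via a double-counting argument on the bipartite cycle-vertex incidence structure. Writing $c(v) := |\{\mathfrak{c} \in C : v \in V(P(\mathfrak{c}))\}|$, we have $\sum_v c(v) = \sum_\mathfrak{c}|V(P(\mathfrak{c}))| \in [|C|,\,k|C|]$, and since each $\mathfrak{c} \in C'$ contains between $1$ and $k$ vertices of $L$, we obtain $|C'| \geq \tfrac{1}{k}\sum_{v \in L} c(v)$. It therefore suffices to lower-bound $\sum_{v \in L} c(v)$ by approximately $\tfrac{k|C|}{4k+2}$, which I plan to do by upper-bounding $\sum_{v \in H} c(v)$ using (i) the planar degree-sum bound applied to the heavy subgraph $\mathcal{G}^s[H]$ (itself planar, hence $\sum_{v \in H} d_v \leq 6|H|$), and (ii) the relation $c(v) \leq \mu(v)$ between cycle-incidence and multi-degree, where each non-loop cycle through $v$ contributes $2$ to $\mu(v)$ and each self-loop contributes $1$, combined with the global bound $\sum_v \mu(v) \leq 2k|C|$. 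Subtracting from $\sum_v c(v)$ gives the desired lower bound.

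The factor $4k+2 = 2(2k+1)$ should arise naturally from combining the factor $2$ of the degree-sum/multi-degree conversion (each multi-edge is counted at both endpoints) with the cycle-length factor $2k+1$ (at most $k$ vertices per cycle plus a $+1$ absorbing the self-loop case). The main obstacle I anticipate is the delicate bookkeeping required to handle degenerate contracted cycles---namely self-loops and length-two ``double edge'' cycles---which contribute asymmetrically to $\mu(v)$ and to $\sum_v c(v)$ and can easily cost an extra constant factor if mishandled; getting the accounting to land on exactly $4k+2$ is where I expect most of the technical work to go.
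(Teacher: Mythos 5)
Your proposed $C'$---the set of cycles whose contracted image passes through a vertex that is \emph{already} light in $\mathcal G_P(C)$---can be much smaller than $\frac{1}{4k+2}|C|$, and in fact can be an arbitrarily small fraction of $|C|$. Here is a concrete family of instances. Take three hub vertices $a,b,c$ and $N$ nested $6$-cycles $\mathfrak c_i = (a, x_i, b, y_i, c, z_i)$ with distinct intermediate vertices; additionally attach three small triangles $(a,d_j^a,e_j^a)$, $j=1,2,3$, at $a$, and likewise at $b$ and $c$. Let $P$ send $x_i\mapsto a$, $y_i\mapsto b$, $z_i\mapsto c$ and fix every other vertex. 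One checks directly that $P$ is good for $C$ and that $G(C)$ is planar (nest the rings concentrically and tuck the nine small triangles near the hubs). In $\mathcal G_P(C)$, each hub has $2+6=8$ distinct neighbours and is therefore heavy, while the $d$'s and $e$'s are light. Each contracted ring is the triangle $(a,b,c)$ and meets no light vertex, so your $C'$ consists only of the nine small triangles: $|C'| = 9$, independent of $N$, while $|C| = N+9$ and $\frac{1}{4k+2}|C|=\frac{N+9}{26}$. For $N$ large this violates the required bound. (The underlying reason your averaging intuition fails is that the $1/7$ fraction of light vertices guaranteed by Euler's formula need not carry a comparable fraction of the cycles; in the example they carry almost none.)

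The monotonicity observation you make---that light vertices stay light under cycle deletion---is true but points in the unhelpful direction. The crucial phenomenon, which the lemma implicitly relies on, is the converse: \emph{heavy} vertices can \emph{become light} once other cycles are deleted. In the example above, if you first discard the nine small triangles, the hubs $a,b,c$ drop to two distinct neighbours each and become light, so the $N$ rings all acquire a light vertex in $\mathcal G_P(C')$. The paper's proof is built around exactly this effect: a first phase iteratively picks a currently light vertex $u_j$, removes all cycles through it, and records the removal index as the cycle's level; a second phase then sweeps the levels in decreasing order and, for each level $j$, compares the set $A(j)$ of surviving level-$j$ cycles to the set $B(j)$ of surviving lower-level cycles through $u_j$, keeping $A(j)$ (and discarding $B(j)$) precisely when $|A(j)|\ge \frac{1}{2k}|B(j)|$. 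The two claims $a\le\frac12|C|$ and $2kc\ge b$ then give the $\frac{1}{4k+2}$ factor. Your double-counting scheme, by contrast, never gets to use vertices that are heavy in $\mathcal G_P(C)$ but light in $\mathcal G_P(C')$, which is where most of the cycle mass may live. Separately, even within your framework the inequality $\sum_{v\in H} d_v \le 6|H|$ applies only to degrees within the induced subgraph $\mathcal G^s[H]$ and does not control $\sum_{v\in H}c(v)$, since parallel edges between heavy vertices can make $c(v)\gg d_v$; so the bookkeeping you flag as delicate is not merely delicate but would not close.
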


\begin{proof}
We prove the lemma by presenting an algorithm that takes as its input a set of cycles $C$ with planar $G(C)$ and a partition $P$ that is good for $C$. The algorithm computes a subset $C'$ that satisfies the properties of the lemma. The algorithm consists of two phases.

In the first phase, we partition $C$ into \emph{levels}, iteratively removing the cycles until $C$ is empty. We start with $C$ being the input set of cycles. In the $j^{\text{th}}$ iteration, we choose an arbitrary vertex $u_j$ with at most 6 distinct neighbors in $\mathcal G_P(C)$ and at least one incident cycle. Such a vertex exists since $G(C)$ is a planar graph, and so by Euler's formula there exists a vertex in
{$\mathcal G_P(C)$}
with at most 5 distinct neighbors other than itself. Taking into account that
{$\mathcal G_P(C)$}
may also contain self-loops, we find a vertex with at most 6 distinct neighbors. Here $C$ refers to the current set $C$, i.e., after the removal of the sets from the previous iterations of the repeat-loop. Every cycle from $C$ that contains $u_j$ is removed from $C$. If a cycle $\mathfrak{c}$ is removed in the $j^{\text{th}}$ iteration, its \emph{level} $\ell(\mathfrak{c})$ is defined to be $j$.

In the second phase, we start again with $C$ being the input set of cycles. We now iterate through the levels in decreasing order. For each level $j$, we let $A(j)$ denote the current subset of cycles in $C$ at level $j$. Note that by definition of the level, all cycles in $A(j)$ must contain vertex $u_j$. Furthermore, we define $B(j)$ to be the subset of cycles of $C$ that contain $u_j$ and have a level smaller than $j$. We observe that if we remove all cycles in $B(j)$ from $C$, then every cycle in $A(j)$ contains a vertex (e.g., vertex $u_j$) with at most 6 distinct neighbors in $\mathcal G_P(C)$. The second phase explores this observation. For each $j$, we compare the size of $A(j)$ to the size of $B(j)$. If $A(j)$ is sufficiently large, i.e., at least a $\frac{1}{2k}$ fraction of $B(j)$, then we keep $A(j)$ and remove $B(j)$ from $C$; otherwise, we remove $A(j)$. Below we argue that at most half of the cycles from $C$ are removed because they are in some removed set $A(j)$. Furthermore, for every $2k$ cycles that are removed because they are contained in a set $B(j)$, at least one cycle remains in $C$. This allows us to deduce the lemma. Details follow after the pseudocode describing more formally the process.

\medskip
\begin{algo}
\AL\,(set $C$ of cycles and a partition $P$ that is good for $C$)
\begin{itemize}
\item $j=1$
\item $C''=C$
\item[] \textbf{\!\!Phase 1:}
\item Repeat until $C$ is empty:
    \begin{itemize}[$\circ$]
    \item Let $u_j$ be a non-isolated vertex that has at most 6 distinct neighbors in $\mathcal G_{P}(C)$
    \item For all cycles $\mathfrak{c} \in C$ that contain $u_j$, let $\ell(\mathfrak{c})=j$
		\item Remove from $C$ all cycles that contain $u_j$
    \item $j=j+1$
		\end{itemize}
\item[]\textbf{\!\!Phase 2:}
\item $C=C''$
\item Repeat until $j=1$:
     \begin{itemize}[$\circ$]
		 \item $j=j-1$
		 \item $A(j) = \{\mathfrak{c} \in C: \ell(\mathfrak{c}) = j\}$
		 \item $B(j) = \{\mathfrak{c} \in C: \ell(\mathfrak{c}) < j \text{ and } \mathfrak{c} \text{ contains } u_j\}$
	   \item if $|A(j)| \ge \frac{1}{2k} \cdot |B(j)|$ then
		       $C = C \setminus B(j)$ else $C=C \setminus A(j)$
		 \end{itemize}
\item Return $C$
\end{itemize}
\end{algo}

It remains to prove the correctness of the algorithm. We first observe that Phase 1 terminates since $\mathcal G_{P}(C)$ is a planar graph and therefore by Euler's formula, it has a vertex with at most
{6}
neighbors (this also holds during the execution of the algorithm since planarity is closed under edge removal and contractions).

It remains to analyze Phase 2 of the algorithm. Every cycle in $C$
\begin{inparaenum}[(a)]
\item is removed because it is contained in some set $A(j)$ that is removed from $C$ in Phase 2, or
\item is removed because it is contained in some set $B(j)$ that is removed from $C$ in Phase 2, or
\item is not removed and stays in the final set $C$.
\end{inparaenum}
Let $a$, $b$, and $c$ be the respective numbers of cycles. Clearly, $|C|=a+b+c$ and to prove the lemma we have to show that $c \ge |C|/k$. We proceed in two steps. We first prove that $a \le \frac12 |C|$, which implies that $b + c \ge \frac12 |C|$. Then we argue that $2 k c \ge b$. This yields $(2k+1) c \ge \frac12 |C|$ and hence $c \ge \frac{1}{4k+2} \cdot |C|$.

\begin{quote}

\begin{claim}
\label{claim:a}
$a \le \frac12 |C|$.
\end{claim}

\begin{proof}
We charge the vertices from the removed sets $A(j)$ to the sets $B(j)$ and derive a bound on the sum of sizes of the sets $B(j)$. Recall that every cycle contains at most $k$ vertices. In every cycle, one vertex is the vertex that has degree at most
{6 (in $\mathcal G_P(C)$)}
when the cycle is removed in Phase 1 of the algorithm. Thus, every cycle is contained in at most $k-1$ different sets $B(j)$. It follows that
\begin{displaymath}
    \sum_{j} |B(j)| \le (k-1) \cdot |C|.
\end{displaymath}
Let $R$ denote the set of indices $j$ such that $A(j)$ is removed from $C$ during Phase 2. Observe that whenever we remove a set $A(j)$, we have $|A(j)| < \frac{1}{2k} |B(j)|$ by the condition in the process. It follows that
\begin{displaymath}
    a =
    \sum_{j \in R} |A(j)| <
    \sum_{j \in R} \frac{1}{2k} |B(j)| \le
    \frac{k-1}{2k} \cdot |C| <
    \frac{1}{2} \cdot |C|.
\end{displaymath}
\end{proof}

\begin{claim}
\label{claim:b}
$2 k c \ge b$.
\end{claim}

\begin{proof}
For every set $B(j)$ removed from $C$, we know that $|A(j)| \ge \frac{1}{2k} |B(j)|$. At the point of time when $B(j)$ is removed from $C$, the set $A(j)$ remains in $C$ because $A(j)$ and $B(j)$ are disjoint. Since we are iterating downwards through the levels of the cycles, the set $A(j)$ is also disjoint from all sets $B(j')$, $j'<j$, and so it is not removed also in any future iteration of the repeat loop. Thus, in this case each cycle from $A(j)$ remains in $C$ until the end of the process and contributes to the value of $c$. Let $R'$ be the set of indices $j$ such that $A(j)$ remains in $C$ during Phase 2 (and hence $B(j)$ is removed from $C$). Since each cycle of $A(j)$, $j\in R'$, contributes to $c$ and since sets $A(j)$ are disjoint, we obtain $\sum_{j\in R'} |A(j)| \le c$. Hence,
\begin{displaymath}
    \frac{1}{2k} b =
    \frac{1}{2k} \sum_{j\in R'} |B(j)| \le
    \sum_{j\in R'} |A(j)| \le
    c,
\end{displaymath}
which implies the claim.
\end{proof}
\end{quote}

This finishes the proof of Lemma \ref{lemma:small-vertices}, which follows from Claims \ref{claim:a} and \ref{claim:b} as argued above.
\end{proof}

Our next lemma shows that if there are no self-loops, then we can cover a large number of cycles by well-contractible vertices.

\begin{lemma}
\label{lemma:contractible-vertices}
Let $C$ be a set of edge disjoint cycles on a vertex set $V$ of length at most $k$. Let $P$ be a partition that is good for $C$ such that $\mathcal G_P(C)$ contains no self-loops. Let $Q$ be the set of vertices in $V$ that have at most 6 distinct neighbors in $\mathcal G_P(C)$. If every cycle in $C$ contains at least one vertex from $Q$, then there is a subset $C' \subseteq C$, $|C'| \ge 12^{-2k} |C|$, such that every cycle $\mathfrak{c} \in C'$ has a vertex $v\in Q$ that is a well-contractible vertex in $\mathcal G_P(C')$.\footnote{We slightly abuse notation here as $P$ may be no longer a good partition due to vertex removal. In this case, we can still define $\mathcal G_{P}(C')$ the same way as above.}
\end{lemma}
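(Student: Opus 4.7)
The plan is a short probabilistic-method argument: for every vertex $v \in Q$ I commit to a random ``local shape'' $\tau(v)$ describing the two edges that a kept cycle is allowed to have at $v$, and I keep only those cycles whose two edges at every $Q$-vertex they visit match $\tau$. This automatically freezes the structure at every used $v \in Q$ into one of the two forms required by the definition of well-contractibility.

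Concretely, I would first define, for each edge of $\mathcal G_P(C)$ incident to a vertex $v \in Q$, its \emph{type}: the pair (other endpoint, parity). Since $v$ has at most $6$ distinct neighbors and parities lie in $\{0,1\}$, there are at most $12$ edge types at $v$, and hence at most $12^2 = 144$ ordered pairs of types. Each cycle $\mathfrak c \in C$ passing through $v$ contributes exactly two edges at $v$, whose types form such a pair. For every $v \in Q$ sample $\tau(v)$ uniformly and independently from these $144$ ordered pairs, and set
\[
  C' \;=\; \bigl\{ \mathfrak c \in C : \text{for every } v \in Q \cap \mathfrak c,\ \text{the multiset of types of } \mathfrak c\text{'s two edges at } v \text{ equals the multiset of } \tau(v) \bigr\}.
\]
For a fixed cycle $\mathfrak c$ and $v \in Q \cap \mathfrak c$ the ``consistent at $v$'' event has probability at least $1/144$, and since $|Q \cap \mathfrak c| \le |\mathfrak c| \le k$ and the $\tau(v)$'s are independent, $\Pr[\mathfrak c \in C'] \ge 144^{-k} = 12^{-2k}$. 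Linearity of expectation combined with the probabilistic method yields a deterministic choice of $\tau$ for which $|C'| \ge 12^{-2k} |C|$.

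It remains to verify well-contractibility. Fix any $\mathfrak c \in C'$; by hypothesis $Q \cap \mathfrak c \neq \emptyset$, so pick $v \in Q \cap \mathfrak c$. Every cycle of $C'$ through $v$ agrees with $\tau(v) = \{(x,p_x),(y,p_y)\}$ at $v$. If $x = y$, every edge incident to $v$ in $\mathcal G_P(C')$ leads to $x$, so $v$ has a single distinct neighbor and is well-contractible by the first case of the definition. If $x \ne y$, then $v$ has exactly the two distinct neighbors $x,y$ in $\mathcal G_P(C')$, every cycle of $C'$ through $v$ traverses both of them (they are its two cycle-neighbors of $v$), every edge $(x,v)$ has parity $p_x$, and every edge $(v,y)$ has parity $p_y$; this is the second case of the definition. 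Either way, $v$ is a well-contractible $Q$-vertex of $\mathfrak c$, which proves the lemma. I do not anticipate any genuine obstacle; the only mild subtlety is staying consistent about ordered versus unordered pairs of edge types, and using the ordered count $144$ matches the lemma's bound of $12^{-2k}$ on the nose (unordered multisets would give the slightly stronger $78^{-k}$).
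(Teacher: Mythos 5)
Your proposal is correct and takes essentially the same approach as the paper: both run a probabilistic-method argument in which each vertex $v\in Q$ independently commits to a random local pattern (the paper samples two random neighbors $x,y$ of $v$ and, if $x\neq y$, random parities $p_x,p_y$; you sample a random ordered pair of edge types), keep only cycles whose two edges at every $Q$-vertex match the committed pattern, and observe that a fixed cycle survives with probability at least $12^{-2k}$. Your ``type'' bookkeeping is just a mild repackaging of the same randomness, and the well-contractibility verification is the same case analysis.
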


\begin{proof}
For each vertex $v \in Q$, we select $x$ and $y$ independently uniformly at random with among its neighbors in $\mathcal G_P(C)$. If $x = y$, we delete from $C$ all cycles $\mathfrak{c}$ that contain $v$ and for which $P(\mathfrak{c})$ contains any vertex other than $v$ or $x$.
If $x \ne y$, we also select independently uniformly at random parities $p_x, p_y \in \{\mbox{odd},\mbox{even}\}$. In this case, we remove from $C$ every cycle $\mathfrak{c}$ that contains $v$ unless $P(\mathfrak{c})$ contains both the edge $(v,x)$ of parity $p_x$ and the edge $(v,y)$ of parity $p_y$. Let $C'$ be the set of remaining cycles. We observe that every cycle $\mathfrak{c}$ in $C'$ contains a vertex $v\in Q$ that is well-contractible (in fact, every vertex from $Q$ contained in $\mathfrak{c}$ is well-contractible). The probability that a fixed cycle $\mathfrak{c}$ is not deleted is at least $12^{-2k}$. Thus, the expected size of $C'$ is at least $12^{-2k} |C|$, and therefore, there exists a set $C'$ of that size that satisfies the lemma.
\end{proof}

We now proceed to the main technical lemma that prepares our reduction step. Given a set $C$ of cycles and a partition $P$, we show that there is a set $Q$ and a subset $C'$ of $C$ of size $\Omega_{\epsilon}(|C|)$ such that we can simultaneously contract edges at all vertices from $Q$ to shorten all cycles in $P(C')$. Furthermore, the degree of each non-isolated vertex in the multigraph $\mathcal G_P(C')$ is comparable to its degree in $\mathcal G_P(C)$.

\begin{lemma}
\label{lemma:MainStep}
Let $\eps \in (0,1)$. Let $C$ be a set of edge-disjoint cycles on vertex set $V$ of length at most $k=O_{\eps}(1)$ such that $G(C)$ is planar and let $P$ be a partition that is good for $C$. There exists a set $C' \subseteq C$ of size $\Omega_{\epsilon}(|C|)$, a set of vertices $Q \subseteq P(V)$,
and a partition $P'$ that is good for $C'$, such that
\begin{itemize}
\item $P'$ is obtained from $P$ by setting for every vertex $u$,
    $P'(u) = \begin{cases}
        P(u) & \mbox{if $\deg_{G(C')}(u) > 0$,} \\
        u & \mbox{otherwise,}
    \end{cases}$
\item $Q$ is an independent set in $\mathcal G_{P'}(C')$,
\item every vertex in $Q$ is well-contractible in $\mathcal G_{P'}(C')$,
\item every cycle in $C'$ that is not a self-loop contains a vertex from $Q$, and
\item every non-isolated vertex $v$ from $\mathcal G_{P'}(C')$ satisfies $\deg_{\mathcal G_{P'}(C')}(v) = \Omega_{\eps}(\deg_{\mathcal G_{P}(C)}(v))$.
\end{itemize}
\end{lemma}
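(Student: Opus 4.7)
The plan is to chain together Lemmas \ref{lemma:small-vertices} and \ref{lemma:contractible-vertices}, then interpose a $3$-coloring step to extract an independent set of well-contractible vertices, and finally run a multigraph analogue of the charging argument of Lemma \ref{lemma:transformation} to secure the degree lower bound. First, I would apply Lemma \ref{lemma:small-vertices} to $(C,P)$ to obtain $C_1 \subseteq C$ with $|C_1| = \Omega_\eps(|C|)$ in which every cycle of $P(C_1)$ contains a vertex with at most $6$ distinct neighbors in $\mathcal{G}_P(C_1)$. Then partition $C_1$ into the set $S$ of cycles that become self-loops under $P$ and the set $N$ of those that do not, and split into two cases depending on which contains at least half of $C_1$. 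In the self-loop-heavy case, take $\tilde C := S$ and $\tilde Q := \emptyset$, which trivially satisfies the independence and cycle-covering conclusions (the latter vacuously, since every cycle of $\tilde C$ is a self-loop in $P(\tilde C)$).

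Otherwise $|N| \ge |C_1|/2$, and applying Lemma \ref{lemma:contractible-vertices} to $N$ (the ``$6$-neighbor'' hypothesis is inherited from the previous step, since degrees only drop when restricting from $\mathcal{G}_P(C_1)$ to $\mathcal{G}_P(N)$) yields $C_2 \subseteq N$ of size $\Omega_\eps(|N|)$ in which every cycle contains a vertex from the set $W$ of well-contractible vertices of $\mathcal{G}_P(C_2)$. The key observation is that, by the definition of well-contractibility, each $w \in W$ has at most $2$ distinct neighbors in $\mathcal{G}_P(C_2)$; hence the subgraph of $\mathcal{G}_P(C_2)$ induced on $W$ (viewed as a simple graph) has maximum degree at most $2$ and is therefore properly $3$-colorable. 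Selecting the color class $\tilde Q$ that hits the largest number of cycles yields an independent set $\tilde Q$ in $\mathcal{G}_P(C_2)$ and a subset $\tilde C \subseteq C_2$ with $|\tilde C| \ge |C_2|/3$ such that every cycle in $\tilde C$ contains a vertex from $\tilde Q$.

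In both cases I would then perform a degree-cleanup step: iteratively delete all cycles through any non-isolated vertex $v$ of the current multigraph whose current degree drops below $\beta \cdot \deg_{\mathcal{G}_P(C)}(v)$, with $\beta = \Omega_\eps(1)$ chosen small enough. Since $\sum_v \deg_{\mathcal{G}_P(C)}(v) \le 2k|C|$ and each vertex is processed at most once (it becomes isolated afterwards), fewer than $2k\beta|C|$ cycles are removed in total, so picking $\beta \le |\tilde C|/(4k|C|) = \Omega_\eps(1)$ leaves a final set $C' \subseteq \tilde C$ of size $\Omega_\eps(|C|)$ on which every non-isolated vertex satisfies the required degree lower bound. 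Defining $P'$ by the prescribed rule (which only re-heads vertices that become isolated in $G(C')$) and setting $Q := \tilde Q \cap \{v : \deg_{\mathcal{G}_{P'}(C')}(v) > 0\}$ completes the construction; goodness of $P'$ for $C'$ follows because $P'$ agrees with $P$ on every vertex that participates in some cycle of $C'$, so the third and fourth properties of Definition \ref{def:good-partition} transfer from $P$ while the first two hold by construction of $P'$.

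The main obstacle is verifying that all five conclusions survive the cleanup simultaneously. They do because each of them is \emph{monotone under cycle/edge deletion} in $\mathcal{G}_P(\cdot)$: the independence of $Q$, the ``at most $2$ distinct neighbors'' clause, the ``all cycles through $v$ contain both $x$ and $y$'' clause, and the parity-agreement clause defining well-contractibility, as well as the cycle-covering property itself (a cycle surviving in $C'$ still contains its originally-selected $\tilde Q$-vertex, which is therefore non-isolated and remains in $Q$). The other non-trivial step is the $3$-coloring, which is what upgrades Lemma \ref{lemma:contractible-vertices}'s mere existence of well-contractible vertices into an \emph{independent} set of them, at the cost of only a constant factor in $|C|$.
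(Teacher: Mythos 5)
Your proof is correct and follows the same overall plan as the paper: first use Lemma~\ref{lemma:small-vertices}, then split off self-loops, then use Lemma~\ref{lemma:contractible-vertices}, extract an independent set of well-contractible vertices, and finish with a degree-cleanup charging argument. There are, however, three genuine points of divergence worth noting. First, for the independent set extraction the paper takes a \emph{maximal} independent subset $Q$ of the well-contractible vertices and argues covering by maximality: if a cycle's well-contractible vertex $v$ is not in $Q$, it has a neighbor $w\in Q$, and by well-contractibility $w$ lies on the same cycle. Your route instead observes that well-contractible vertices have at most two distinct neighbors, so the induced subgraph has maximum degree at most~$2$ and is $3$-colorable; picking the most-used color class yields an independent set while keeping a third of the cycles. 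Both are valid; the maximality trick loses no cycles at this stage but is slightly less self-contained (it needs the ``neighbor lies on the cycle'' observation), while the $3$-coloring is more elementary at the cost of a constant factor, which is immaterial here. Second, you perform the degree cleanup \emph{after} fixing $Q$, whereas the paper applies the cleanup (via Lemma~\ref{lemma:transformation}) \emph{before} choosing the maximal independent set. Your ordering is sound because, as you correctly observe, all five conclusions are monotone under cycle deletion; the paper's ordering sidesteps having to make that monotonicity argument explicit. Third, for the cleanup the paper invokes Lemma~\ref{lemma:transformation} (stated for the simple graph $G(C)$), whereas you carry out the charging argument directly in the multigraph $\mathcal{G}_P(\cdot)$ with $\sum_v \deg_{\mathcal G_P(C)}(v)\le 2k|C|$ playing the role of the planarity bound; your version is arguably the more careful match to the fifth conclusion, which speaks of multigraph degrees rather than degrees in $G(C')$. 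You also swap the order of Lemma~\ref{lemma:small-vertices} and the self-loop split relative to the paper; this is harmless since degrees only drop when restricting to a sub-multiset of cycles, so the ``at most 6 distinct neighbors'' property is inherited.
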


\begin{proof}
Let $C_1 \subseteq C$ be the subset of cycles from $C$ that are self-loops in $P(C)$ and let $C_2 = C \setminus C_1$. If $|C_1| \ge |C_2|$, then we choose $C'=C_1$, $Q = \emptyset$ and $P'$ as in the lemma to satisfy the properties of Lemma~\ref{lemma:MainStep}.

It remains to consider the case that $|C_2| > |C_1|$, in which case we select $C' \subseteq C_2$. We first set $P' = P$ and then iteratively modify $P'$ such that it remains a good partition for the current set of cycles and can be obtained as the statement of the lemma specifies.

We modify $P'$ in the way described earlier in the case of deletions of cycles. First, we set $P'(u) = u$ for every vertex $u$ with degree $0$ in $G(C_2)$. Then we apply Lemma \ref{lemma:small-vertices} with $C_2$ and $P'$ to obtain a set of cycles $C_2'$ of size $\Omega_{\epsilon}(|C|)$, such that every cycle from $P'(C_2')$ contains at least one vertex with at most 6 distinct neighbors. In order to maintain the first property in the lemma statement, we modify $P'$ by setting $P'(v) = v$ for new isolated vertices in $G(C_2')$. Note that after the modification every cycle from $P'(C_2')$ still contains at least one vertex with at most 6 distinct neighbors.

Next, we apply Lemma \ref{lemma:contractible-vertices} to $C_2'$ and the current $P'$. We obtain sets $Q'$ and $C_2'' \subseteq C_2'$. $Q'$ is the set of vertices that have at most $6$ distinct neighbors in $\mathcal G_{P'}(C_2'')$. $|C_2''| = \Omega_{\epsilon}(|C_2'|) = \Omega_{\epsilon}(|C|)$ and every cycle in $C_2''$ contains a vertex from $Q'$ that is well-contractible. Then we apply Lemma \ref{lemma:transformation} to $C_2''$ to obtain the final set of cycles $C' = \Omega_{\epsilon}(|C''_2|)$. We modify $P'$ as before, by setting $P'(v)$ for new isolated vertices in $\mathcal G_{P'}(C')$. Clearly, $C'$ and $P'$ satisfy the first and the fifth property specified in the lemma.

We set $Q$ to be any maximal independent subset of vertices in $Q'$ that are well-contractible in $\mathcal G_{P'}(C')$. This ensures both the second and third property. It remains to show that the fourth property is satisfied. By the construction of $C'$, every cycle $\mathfrak{c} \in C'$ contains a vertex $v \in Q'$ that is well-contractible in $\mathcal G_{P'}(C')$. If $v \in Q$, we are done. Otherwise, since $Q$ is maximal, we know that a neighbor of $v$ in $\mathcal G_{P'}(C')$ is in $Q$. By the definition of well-contractible vertices, this neighbor is also in $\mathfrak{c}$, which completes the proof of the lemma.
\end{proof}


\subsection{Finalizing the proof of Lemma \ref{lemma:second_reduction}}

We are ready to complete the proof of Lemma \ref{lemma:second_reduction}.
For a planar graph $G = (V,E)$, let $C$ be a set of $\Omega_{\eps}(|V|)$ odd-length cycles in $G$ and let $P$ be a partition that is good for $C$ such that all cycles in $P(C)$ have length at most $\ell\ge 2$. We first apply Lemma \ref{lemma:MainStep} to obtain the sets $C'$ and $Q$ and the partition $P'$. We set $C^* = C'$.

Now we consider all vertices in $Q$ and contract each of them into one of its neighbors. This construction is well-defined since $Q$ is an independent set by the second property of Lemma \ref{lemma:MainStep}. For each vertex $u \in Q$, we select one of its neighbors in $\mathcal{G}_{P'}(C^*)$ and call it $\gamma(u)$. We want to contract all edges $(u,\gamma(u))$. A refinement $P^*$ of $P'$ is created as follows. For each vertex $u \in Q$ and for each vertex $v \in P'^{-1}(u)$, we set $P^*(v) = \gamma(u)$. For all other vertices $v \in V$, we set $P^*(v) = P'(v)$. One can easily verify that the resulting mapping $P^*$ is a good partition for $C^*$.

Our next step is to argue that the length of the cycles in $P^*(C^*)$ is at most $\ell-1$. We first note that the fourth property in the statement of Lemma \ref{lemma:MainStep} ensures that every cycle $\mathfrak{c} \in C^*$ that is not a self-loop contains a vertex $u \in Q$. This vertex is well-contractible in $\mathcal G_{P'}(C^\star)$---due to the third property---and our construction above contracts an edge in $\mathfrak{c}$ incident to this vertex.

Therefore, to complete the proof, we only need to argue that if \RW$(\mathcal{G}_{P^*}(C^*),t)$ finds an odd-parity cycle with probability $\Omega_{\eps}(1)$, then the probability that \RW$(\mathcal{G}_{P}(C),3t)$ finds an odd-parity cycle is also $\Omega_{\eps}(1)$.

Each edge $(u,v)$ in $\mathcal{G}_{P^*}(C^*)$ belongs to a cycle $\mathfrak c$ in $P^*(C^*)$. By our construction, a given edge $(u,v)$ was either already present in $P(\mathfrak{c})$, or is a result of a contraction and there is a vertex $q \in Q$ such that $(u,q)$ and $(q,v)$ are edges of $P(\mathfrak{c})$. Note that in either case, the parity of the corresponding path from $u$ to $v$ is maintained. The probability to move from $u$ to $v$ via edge $(u,v)$ in $\mathcal{G}_{P^*}(C^*)$ is $1/\deg_{\mathcal{G}_{P^*}(C^*)}(u)$. If $(u,v)$ was already present in $\mathcal{G}_{P}(C)$, then the probability to move from $u$ to $v$ via edge $(u,v)$ in $\mathcal{G}_{P}(C)$ is $1/\deg_{\mathcal{G}_{P}(C)}(u)$. The case that $(u,v)$ corresponds to $(u,q)$ and $(q,v)$ in $\mathcal{G}_{P}(C)$ is more complicated since parallel edges become relevant. Since $q$ is a well-contractible vertex in $\mathcal G_P(C')$, we know that all cycles in $P(C')$ that pass through $q$ also go through $u$ and $v$, and all contain copies of the edges $(u,q)$ and $(q,v)$. Let us fix a copy of $(u,q)$ and call it $e$. In this case, the probability to move in $\mathcal{G}_{P}(C)$ from $u$ to $v$ through edge $e$ in two steps is
$1/(2\deg_{\mathcal{G}_{P}(C)}(u))$,
since the probability to move along $e$ is $1/\deg_{\mathcal{G}_{P}(C)}(u)$ and the probability to take a copy of $(q,v)$ is $\frac12$.
If we combine these arguments with the fifth property specified in Lemma \ref{lemma:MainStep}, we conclude that the probability to move from $u$ to $v$ along edge $(u,v)$ in $\mathcal{G}_{P^*}(C^*)$ differs from the probability to do the corresponding move in $\mathcal{G}_{P}(C)$ by at most a factor of $O_\epsilon(1)$. We further conclude inductively that if a random walk in $\mathcal{G}_{P^*}(C^*)$ moves in $t = O_{\epsilon}(1)$ steps from vertex $u$ to vertex $v$ with probability $p$, then the same movement happens in $\mathcal{G}_{P}(C)$ with probability $\Omega_{\epsilon}(p)$. Furthermore, by definition of well-contractible vertices, if the walk in $\mathcal{G}_{P^*}(C^*)$ contains a cycle of odd parity, then so does the corresponding walk in $\mathcal{G}_{P}(C)$.

It remains to address the probability of choosing $u$ as the starting vertex. Since we may contract many vertices into $u$ during our construction, the probability of choosing $u$ as a starting vertex in $\mathcal{G}_{P^*}(C^*)$ can be significantly greater than the probability of choosing $u$ in $\mathcal{G}_{P}(C)$. To this end, recall that the probability to choose $u$ as a starting vertex is $p' = |P^*{}^{-1}(u)| / |V|$. We note that with probability $\Omega_{\eps}(p')$ we sample either $u$ or a vertex $q\in Q$ that is contracted into $u$, i.e., a vertex $q$ with $\gamma(q) = u$. By the definition of well-contractible vertices, the probability to move from $q$ to $u$ in the first step of the random walk is at least $\frac12$. Summing up over all starting vertices (including $u$), we obtain that we end up at $u$ (either in step 1 or 2 of the random walk) with probability at least $p'/2$. Thus, for every walk that happens in $\mathcal{G}_{P^*}(C^*)$ with probability $p\cdot p'$ there is a (set of) walks in $\mathcal{G}_{P}(C)$ such that one of them happens with probability $\Omega_{\epsilon}(p \cdot p')$. Furthermore, if the walk in $\mathcal{G}_{P^*}(C^*)$ contains an odd-parity cycle then so does the walk in $\mathcal{G}_{P}(C)$. For different walks in $\mathcal{G}_{P^*}(C^*)$, the sets of corresponding walks in $\mathcal{G}_{P}(C)$ are disjoint. Thus, we do not double count and our result follows by observing that the walk in $\mathcal{G}_{P}(C)$ has length at most $2t+1 \le 3t$.
\qed

\junk{
\section{Proof of Lemma \ref{lemma:ExistenceOfH}}
Let $C$ be the final set of cycles $C_k$ obtained by from our reduction.
We claim that the lemma \ref{lemma:ExistenceOfH} is satisfied by graph $G(C_k)$. This can be seen by starting the reduction with $C_1=C$ and observing
that Lemmas \ref{lemma:small-vertices}, \ref{lemma:contractible-vertices}, and \ref{lemma:MainStep} are satisfied without modifications of the set of cycles $C$. Thus, during the reduction
the set $C$ does not change and at the end establishes the properties of Lemma \ref{lemma:ExistenceOfH} via Lemmas \ref{lemma:second_reduction} and \ref{lemma:Selfloops}.
}

\section{Extending the analysis to minor-free graphs}
\label{sec:minor-free}

While throughout the paper we focused on testing bipartiteness of planar graphs, our techniques can easily be extended to any class of minor-free graphs. Recall that a graph $H$ is called a \emph{minor} of a graph $G$ if $H$ can be obtained from $G$ via  a sequence of vertex and edge deletions and edge contractions. For any graph $H$, a graph $G$ is called \emph{$H$-minor-free} if $H$ is not a minor of $G$. (For example, by Kuratowski's Theorem, a graph is planar if and only if it is $K_{3,3}$-minor-free and $K_5$-minor-free.)

Let us fix a graph $H$ and consider the input graph $G$ to be an $H$-minor-free graph. We now argue now that entire analysis presented in the previous sections easily extends to testing bipartiteness of $G$. The key observation is that our analysis in Sections \ref{sec:preliminaries}--\ref{sec:proof-of-lemma:second_reduction} relies only on the following properties of planar graphs:
\begin{enumerate}[\it (i)]
\item the number of edges in a planar graph is $O(n)$, where $n$ is the number of vertices (Fact \ref{fact:planar_limited_edges}),
\item every minor of a planar graph is planar (Fact \ref{fact:planar_minor_planar}),
\item a direct implication of the Klein-Plotkin-Rao theorem for planar graphs (Lemma \ref{lemma:partition-into-conn-comp-of-small-diam}).
\end{enumerate}
The first two properties hold for any class of $H$-minor-free-graphs (that is, the second property would be that every minor of an $H$-minor-free-graph is $H$-minor-free). Since the Klein-Plotkin-Rao theorem holds for any minor-free graph as well (cf.~\cite{KPR93}), so does a version of Lemma \ref{lemma:partition-into-conn-comp-of-small-diam} with a slightly different constant hidden by the big $O$ notation. Therefore, we can proceed with nearly identical  analysis for $H$-minor-free graphs and arrive at the following version of Theorem \ref{thm:main-bipartiteness}.

\begin{theorem}
\label{thm:main-bipartiteness-minor-free}
Let $H$ be a fixed graph. There are positive functions $f$ and $g$ such that for any $H$-minor-free-graph $G$:
\begin{itemize}
\item if $G$ is bipartite, then \RBE$(G,\eps)$ accepts $G$, and
\item if $G$ is $\eps$-far from bipartite, then \RBE$(G,\eps)$ rejects $G$ with probability at least $0.99$.
\end{itemize}
\end{theorem}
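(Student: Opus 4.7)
The plan is to mirror the entire pipeline from Sections \ref{sec:preliminaries}--\ref{sec:proof-of-lemma:second_reduction}, replacing ``planar'' by ``$H$-minor-free'' everywhere and checking that each ingredient still goes through, possibly with constants depending on $H$. As the paragraph preceding the statement points out, the analysis only uses planarity through three facts: (i) the linear edge bound $|E| = O(|V|)$ (Fact \ref{fact:planar_limited_edges}), (ii) closure of the class under taking minors (Fact \ref{fact:planar_minor_planar}), and (iii) the Klein--Plotkin--Rao decomposition (Lemma \ref{lemma:partition-into-conn-comp-of-small-diam}). The first step of the proof is to record the $H$-minor-free analogues: Mader's theorem gives $|E| \le c_H |V|$ for some constant $c_H$ depending only on $H$; $H$-minor-freeness is trivially preserved under vertex/edge deletion and edge contraction; and Klein, Plotkin, and Rao prove their decomposition theorem in the same paper \cite{KPR93} for any non-trivial minor-closed family, with a decomposition-diameter bound $O_H(1/\delta^2)$.

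Next I would re-examine each lemma in the chain and verify that no other planarity-specific claim has been smuggled in. In Lemma \ref{lemma:first_reduction}, the only use of planarity is through Lemma \ref{lemma:partition-into-conn-comp-of-small-diam}, so its $H$-minor-free version yields $\Omega_\eps(|V|)$ edge-disjoint odd cycles of length $k = k(\eps,H) = O_{\eps,H}(1)$. In Lemma \ref{lemma:transformation}, planarity enters only via $\sum_v \deg_G(v) \le 6|V|$; replacing $6$ with $2c_H$ preserves the same $\Omega_\eps(|V|)$ conclusion with a slightly worse constant. Lemma \ref{lemma:small-vertices} uses Euler's formula to find a low-degree vertex; here I would substitute Mader's bound, which still guarantees (by averaging) a vertex with at most $2c_H$ distinct neighbors in any subgraph, and propagate the resulting constant ($2c_H+1$ in place of $6$) through Lemma \ref{lemma:contractible-vertices} (whose exponent $12^{-2k}$ becomes $(2c_H+2)^{-2k}$) and Lemma \ref{lemma:MainStep}. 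Fact \ref{fact:planar_minor_planar} is invoked implicitly whenever we pass to subgraphs or contracted multigraphs; the $H$-minor-free analogue ensures these operations keep us in the same class, so the low-degree-vertex argument still applies at every stage.

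Once these constant substitutions are in place, the rest of the argument is essentially symbol-pushing: Lemma \ref{lemma:Selfloops} uses only the edge bound (now $c_H |V|$) to deduce $\Omega_\eps(|V|)$ non-isolated vertices in $H_k$, and the proof of Lemma \ref{lemma:second_reduction} is structural, so it transfers verbatim. Composing the reductions $k$ times with $k = O_{\eps,H}(1)$ still yields the same conclusion, and the final boosting from one successful random walk to $f(\eps)$ independent walks is unchanged.

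The main obstacle I would expect is bookkeeping rather than a genuine mathematical difficulty: one must verify that all auxiliary claims (e.g., that after deleting cycles or contracting edges one remains in an $H$-minor-free class so that one can re-apply the low-degree-vertex argument and the Klein--Plotkin--Rao decomposition) truly hold, and that the cascade of constants depending on $H$ through $k$, $c_H$, and the exponents $(2c_H+2)^{-2k}$ stays bounded by a function of $\eps$ and $H$ alone. As long as these are tracked carefully, the functions $f(\eps)$ and $g(\eps)$ in \RBE{} become $f(\eps,H)$ and $g(\eps,H)$, establishing Theorem \ref{thm:main-bipartiteness-minor-free}.
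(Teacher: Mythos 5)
Your proposal is correct and takes essentially the same approach as the paper: identify the three planarity-specific ingredients (linear edge bound, closure under minors, Klein--Plotkin--Rao decomposition), note that all three hold for $H$-minor-free graphs with constants depending on $H$, and conclude that the rest of the analysis transfers with only constant bookkeeping. The paper's own proof is simply a terser version of the same argument, and your extra tracing of where each ingredient enters the lemma chain matches what is implicit there.
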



\section{Conclusions}

In this paper we proved that bipartiteness is testable in constant time for arbitrary planar graphs. Our result was proven via a new type of analysis of random walks in planar graphs. Our analysis easily carries over to classes of graphs defined by arbitrary fixed forbidden minors.

This is merely the first step that poses the following main question:
\begin{center}
What graph properties can be tested in constant time in minor-free graphs?
\end{center}

Going through the analysis of the paper we obtain a running time of $2^{2^{\poly(1/\eps)}}$. While we did not try to optimize it, it seems that our technique
requires at least an exponential running time. An interesting open question is whether one can get polynomial or pseudopolynomial running time in $1/\epsilon$.
This seems to require significantly new techniques. In particular, to the best of our knowledge, it is not known how to obtain polynomial running time even for bounded-degree planar graphs.



\newcommand{\Proc}{Proceedings of the~}
\newcommand{\ALENEX}{Workshop on Algorithm Engineering and Experiments (ALENEX)}
\newcommand{\BEATCS}{Bulletin of the European Association for Theoretical Computer Science (BEATCS)}
\newcommand{\CCCG}{Canadian Conference on Computational Geometry (CCCG)}
\newcommand{\CIAC}{Italian Conference on Algorithms and Complexity (CIAC)}
\newcommand{\COCOON}{Annual International Computing Combinatorics Conference (COCOON)}
\newcommand{\COLT}{Annual Conference on Learning Theory (COLT)}
\newcommand{\COMPGEOM}{Annual ACM Symposium on Computational Geometry}
\newcommand{\DCGEOM}{Discrete \& Computational Geometry}
\newcommand{\DISC}{International Symposium on Distributed Computing (DISC)}
\newcommand{\ECCC}{Electronic Colloquium on Computational Complexity (ECCC)}
\newcommand{\ESA}{Annual European Symposium on Algorithms (ESA)}
\newcommand{\FOCS}{IEEE Symposium on Foundations of Computer Science (FOCS)}
\newcommand{\FSTTCS}{Foundations of Software Technology and Theoretical Computer Science (FSTTCS)}
\newcommand{\ICALP}{Annual International Colloquium on Automata, Languages and Programming (ICALP)}
\newcommand{\ICCCN}{IEEE International Conference on Computer Communications and Networks (ICCCN)}
\newcommand{\ICDCS}{International Conference on Distributed Computing Systems (ICDCS)}
\newcommand{\IJCGA}{International Journal of Computational Geometry and Applications}
\newcommand{\INFOCOM}{IEEE INFOCOM}
\newcommand{\IPCO}{International Integer Programming and Combinatorial Optimization Conference (IPCO)}
\newcommand{\ISAAC}{International Symposium on Algorithms and Computation (ISAAC)}
\newcommand{\ISTCS}{Israel Symposium on Theory of Computing and Systems (ISTCS)}
\newcommand{\JACM}{Journal of the ACM}
\newcommand{\LNCS}{Lecture Notes in Computer Science}
\newcommand{\PODS}{ACM SIGMOD Symposium on Principles of Database Systems (PODS)}
\newcommand{\RANDOM}{International Workshop on Randomization and Approximation Techniques in Computer Science (RANDOM)}
\newcommand{\RSA}{Random Structures and Algorithms}
\newcommand{\SICOMP}{SIAM Journal on Computing}
\newcommand{\SODA}{Annual ACM-SIAM Symposium on Discrete Algorithms (SODA)}
\newcommand{\SPAA}{Annual ACM Symposium on Parallel Algorithms and Architectures (SPAA)}
\newcommand{\STACS}{Annual Symposium on Theoretical Aspects of Computer Science (STACS)}
\newcommand{\STOC}{Annual ACM Symposium on Theory of Computing (STOC)}
\newcommand{\SWAT}{Scandinavian Workshop on Algorithm Theory (SWAT)}
\newcommand{\TALG}{ACM Transactions on Algorithms}
\newcommand{\UAI}{Conference on Uncertainty in Artificial Intelligence (UAI)}
\newcommand{\WADS}{Workshop on Algorithms and Data Structures (WADS)}
\newcommand{\TCS}{Theory of Computing Systems}

\renewcommand{\Proc}{{\rm In} Proceedings of the~}


\end{document}